\documentclass[journal]{IEEEtran}
\IEEEoverridecommandlockouts
% If IEEEtran.cls has not been installed into the LaTeX system files,
% manually specify the path to it like:
% \documentclass[journal]{../sty/IEEEtran}

% *** GRAPHICS RELATED PACKAGES ***
%
\ifCLASSINFOpdf
  \usepackage[pdftex]{graphicx}
  % \usepackage[pdftex]{graphicx}
  % declare the path(s) where your graphic files are
  % \graphicspath{{../pdf/}{../jpeg/}}
  % and their extensions so you won't have to specify these with
  % every instance of \includegraphics
  % \DeclareGraphicsExtensions{.pdf,.jpeg,.png}
\else
  % or other class option (dvipsone, dvipdf, if not using dvips). graphicx
  % will default to the driver specified in the system graphics.cfg if no
  % driver is specified.
  % \usepackage[dvips]{graphicx}
  % declare the path(s) where your graphic files are
  % \graphicspath{{../eps/}}
  % and their extensions so you won't have to specify these with
  % every instance of \includegraphics
  % \DeclareGraphicsExtensions{.eps}
\fi
% graphicx was written by David Carlisle and Sebastian Rahtz. It is
% required if you want graphics, photos, etc. graphicx.sty is already
% installed on most LaTeX systems. The latest version and documentation
% can be obtained at: 
% http://www.ctan.org/tex-archive/macros/latex/required/graphics/
% Another good source of documentation is "Using Imported Graphics in
% LaTeX2e" by Keith Reckdahl which can be found at:
% http://www.ctan.org/tex-archive/info/epslatex/
%
% latex, and pdflatex in dvi mode, support graphics in encapsulated
% postscript (.eps) format. pdflatex in pdf mode supports graphics
% in .pdf, .jpeg, .png and .mps (metapost) formats. Users should ensure
% that all non-photo figures use a vector format (.eps, .pdf, .mps) and
% not a bitmapped formats (.jpeg, .png). IEEE frowns on bitmapped formats
% which can result in "jaggedy"/blurry rendering of lines and letters as
% well as large increases in file sizes.
%
% You can find documentation about the pdfTeX application at:
% http://www.tug.org/applications/pdftex

% *** Do not adjust lengths that control margins, column widths, etc. ***
% *** Do not use packages that alter fonts (such as pslatex).         ***
% There should be no need to do such things with IEEEtran.cls V1.6 and later.
% (Unless specifically asked to do so by the journal or conference you plan
% to submit to, of course. )
\usepackage{cite}
\usepackage[cmex10]{amsmath}
\usepackage{array}
\usepackage{amsthm}
\usepackage{amsmath}
\usepackage{amsfonts}
\usepackage{balance}
\usepackage[usestackEOL]{stackengine}
\usepackage{amssymb}
\usepackage[utf8x]{inputenc}
\usepackage[font=small,labelfont=bf]{caption}
% correct bad hyphenation here
\hyphenation{op-tical net-works semi-conduc-tor}

\newtheorem{definition}{Definition}
\newtheorem{theorem}{Theorem}

\newtheorem{lemma}{Lemma}

\newtheorem{remark}{Remark}

\newtheorem{assumption}{Assumption}[section]
%\raggedbottom
\allowdisplaybreaks

\begin{document}
%
% paper title
% can use linebreaks \\ within to get better formatting as desired
% Do not put math or special symbols in the title.
\title{Sparse Packetized Predictive Control of Disturbed Plants Over Channels with Data Loss}
%\title{Networked Sparse Packetized Predictive Control Subject to Packet Dropouts and Time Delays}
%
%
% author names and IEEE memberships
% note positions of commas and nonbreaking spaces ( ~ ) LaTeX will not break
% a structure at a ~ so this keeps an author's name from being broken across
% two lines.
% use \thanks{} to gain access to the first footnote area
% a separate \thanks must be used for each paragraph as LaTeX2e's \thanks
% was not built to handle multiple paragraphs
%

\author{Mohsen~Barforooshan,
	Masaaki~Nagahara,~\IEEEmembership{Senior Member,~IEEE,}
	Jan~\O stergaard,~\IEEEmembership{Senior Member,~IEEE}
% 2 questions:1-shoud I mentiom my IEEE membership?2-Shouldn't I use only one IEEE membership for Milan and Photis?        
\thanks{

M. Barforooshan and J. \O stergaard are with the Department of
Electronic Systems, Aalborg University, DK-9220, Aalborg, Denmark (email: mob@es.aau.dk; jo@es.aau.dk).

M. Nagahara is with the Institute of Environmental Science and Technology, The University of Kitakyushu, Fukuoka, 808-0139, Japan
(email: nagahara@kitakyu-u.ac.jp). His work is partly supported by JSPS KAKENHI Grant Numbers JP20H02172, JP20K21008, and JP19H02301.
}% <-Photios Left
%\thanks{J. Doe and J. Doe are with Anonymous University.}% <-this % stops a space
%\thanks{Manuscript received April 19, 2005; revised December 27, 2012.}
}

\maketitle

% As a general rule, do not put math, special symbols or citations
% in the abstract or keywords.
\begin{abstract} 
This paper investigates closed-loop stability of a linear discrete-time plant subject to bounded disturbances when controlled according to packetized predictive control (PPC) policies. In the considered feedback loop, the controller is connected to the actuator via a digital communication channel imposing bounded dropouts. Two PPC strategies are taken into account. In both cases, the control packets are generated by solving sparsity-promoting optimization problems. One is based upon an $\ell^2$-constrained $\ell^0$ optimization problem. Such problem is relaxed by an $\ell^1$-$\ell^2$ optimization problem in the other sparse PPC setting. We utilize effective solving methods for the latter optimization problems. Moreover, we show that in the presence of plant disturbances, the $\ell^2$-constrained $\ell^0$ sparse PPC and unconstrained $\ell^1$-$\ell^2$ sparse PPC guarantee practical stability for the system if certain conditions are met. More precisely, in each case, we can derive an upper bound on system state if the design parameters satisfy certain conditions. The bounds we derive are increasing with respect to the disturbance magnitude. We show via simulation that in both cases of proposed sparse PPC strategies, larger disturbances bring about performance degradation with no effect on system practical stability.         	
\end{abstract}
% Note that keywords are not normally used for peerreview papers.
\begin{IEEEkeywords}
Networked control systems, packetized predictive control, sparsity, disturbance, data packet dropouts.
\end{IEEEkeywords}

% For peer review papers, you can put extra information on the cover
% page as needed:
% \ifCLASSOPTIONpeerreview
% \begin{center} \bfseries EDICS Category: 3-BBND \end{center}
% \fi
%
% For peerreview papers, this IEEEtran command inserts a page break and
% creates the second title. It will be ignored for other modes.
\IEEEpeerreviewmaketitle

%********************************************************************************************************************************************************
%********************************************************************************************************************************************************
%********************************************************************************************************************************************************
%********************************************************************************************************************************************************

\section{Introduction}
\label{intro}
Considering communication constraints in the analysis and design of feedback control loops gave rise to the paradigm of networked control systems (NCSs)\cite{zhang2013network}. In an NCS, the components of the control system exchange information over communication channels. Inevitable imperfections of communication channels (time delays, data loss, quantization, etc.) bring new limitations to system performance\cite{zhang2019networked}. Analyzing such limitations and designing control strategies compensating for them is a crucial aim in the area of NCSs.

Along the above lines, packetized predictive control (PPC) is proposed to attain robustness towards channel uncertainties such as packet dropouts and time delays \cite{tang2006compensation,quevedo2007packetized}. Basically, PPC is a model predictive control (MPC) strategy in which at each time instant, the controller transmits the entire sequence of calculated control commands as a data packet to the plant\cite{quevedo2011packetized}. In an early attempt, \cite{quevedo2012robust} derives the conditions of stochastic stability for a nonlinear plant controlled based on PPC over a channel with Markovian data loss. In \cite{peters2016shaped}, the authors investigate how fixed-rate vector quantization affect mean-square error (MSE) performance  in PPC. Recently, \cite{da2020hybrid} applies hybrid automatic repeat request (HARQ) technique to PPC for reducing the data transmission power.

There are several advantages associated with minimizing the control effort in feedback control applications. Such merits are mostly environmental (lowering pollution\cite{dunham1974automatic}), economical (energy saving\cite{xiao2020modeling}) and technical (reducing noise and vibration\cite{anderson2007optimal}). Along the lines of control effort minimization, \cite{nagahara2016maximum} proposes maximum hands-off control. In this approach, the goal is obtaining as many zero-valued control commands as possible over the time while maintaining a certain performance level for the closed-loop system\cite{nagahara2016discrete}. Hence, the maximum hands-off controller solves a sparsity-promoting optimization problem subject to performance constraints at each time instant\cite{chatterjee2016characterization}. Maximum hands-off control has been studied in various control settings over the literature. For instance, it is employed to attain consensus in multi-agent systems\cite{ikeda2018maximum}. The maximum hands-off control for linear plants with polytopic uncertainties is explored in \cite{kishida2018hands}. As another example, \cite{ikeda2020maximum} investigates maximum hands-off control problem under space-sparsity constraints.

The ideas of PPC and maximum hands-off control coexist in sparse PPC. More specifically, sparsity-promoting cost functions are utilized in sparse PPC to generate the control packets\cite{nagahara2011sparse,nagahara2012packetized}. Recall that in PPC, the control packets are calculated by minimizing finite-horizon cost functions. Sparse PPC for linear disturbance-free plants over delay-free communication channels with bounded dropouts is investigated in \cite{nagahara2014sparse}. The corresponding sparsity-promoting optimization problems in \cite{nagahara2014sparse} are ${\ell}^2$-constrained ${\ell}^0$ and a relaxed version of it; unconstrained ${\ell}^1$-${\ell}^2$ problems. For  ${\ell}^2$-constrained ${\ell}^0$ sparse PPC, \cite{nagahara2014sparse} derives the conditions of asymptotic stability. However, according to \cite{nagahara2014sparse}, only practical stability can be guaranteed for the unconstrained ${\ell}^1$-${\ell}^2$ sparse PPC. Furthermore, effective algorithms are proposed in \cite{nagahara2014sparse} for optimizing the considered sparsity promoting cost functions. In our recent contribution \cite{barforooshan2019sparse}, we extended the results of  \cite{nagahara2014sparse} to the case with constant channel delays.

In this paper, we analyze the stability of an NCS wherein a fully observable discrete-time linear plant exchanges data with a controller operating based on sparse PPC. The plant is subject to bounded disturbances. No communication constraint is considered for the path between the sensor and the controller. However, data packets produced by the controller might drop in the channel before reaching the actuator. The number of consecutive packet dropouts is assumed to be bounded. We consider a PPC strategy based on ${\ell}^2$-constrained ${\ell}^0$ sparsity-promoting optimization. As another control policy, we consider unconstrained ${\ell}^1$-${\ell}^2$ sparse PPC which performs based on a relaxation of ${\ell}^2$-constrained ${\ell}^0$ optimization. We show in each case that under certain conditions, practical stability can be achieved for the considered NCS. Motivated by their strength, we inspire from the approaches employed in \cite{nagahara2014sparse} when anlalyzing the stability and finding algorithms to solve the involved optimization problems. Nevertheless, as opposed to \cite{nagahara2014sparse}, the plant is disturbed here. The $\ell^2$ norm of the considered disturbance signal is bounded from above by a finite value. So, the first contribution of our work is that we propose sufficient conditions under which the ${\ell}^2$-constrained ${\ell}^0$ sparse PPC and unconstrained ${\ell}^1$-${\ell}^2$ sparse PPC render the system practically stable, despite the presence of plant disturbances and channel data loss. This leads to proposing a control design approach which is the second contribution of this paper. The third contribution is to shed lights on the performance loss introduced by the existence of plant disturbance. This is nailed by showing that the upper bounds derived on the $\ell^2$ norm of the plant state are increasing with respect to the disturbance upper bounds in both cases of ${\ell}^2$-constrained ${\ell}^0$ sparse PPC and unconstrained ${\ell}^1$-${\ell}^2$ sparse PPC. We illustrate our findings via a simulation example. In this example, we demonstrate that the practical stability is obtained with sparse control inputs by the proposed ${\ell}^2$-constrained ${\ell}^0$ sparse PPC and unconstrained ${\ell}^1$-${\ell}^2$ sparse PPC. Moreover, we observe from the simulation example that while stability is intact, system performance degrades when disturbance magnitude grows.

The paper is outlined as follows. The notation is provided by Section~II. The general PPC problem formulation is given in Section~III. Section~IV and Section~V formalize the ${\ell}^2$-constrained ${\ell}^0$ sparse PPC and unconstrained ${\ell}^1$-${\ell}^2$ sparse PPC with their corresponding stability analyzes, respectively. The simulation example is presented in Section~VI. Section X draws the conclusions of the paper.   

% no \IEEEPARstart
%*******************************************************************************************************************************************************
%********************************************************************************************************************************************************
%********************************************************************************************************************************************************
%********************************************************************************************************************************************************
%********************************************************************************************************************************************************
\section{Notation}
The set of natural numbers is symbolized by $\mathbb{N}$ based on which ${\mathbb{N}_{0}}$ is defined as ${\mathbb{N}_{0}}\triangleq\mathbb{N}\cup\{0\}$. We denote the transpose of matrix (vector) $M$ by $M^{\top}$. Moreover, $I_{n\times{n}}$ represents an identity matrix with dimensions $n\times{n}$ where $n\in{\mathbb{N}}$. For the vector $z=[{z_1},\dots,{z_n}]^{\top}\in{{\mathbb{R}}^{n}}$ in the euclidean space, $\ell^1$, $\ell^2$, and $\ell^\infty$ norms are defined as ${{\lVert{z}\rVert}_{1}}\triangleq{{|z_1|}+\dots+{|z_n|}}$, ${{\lVert{z}\rVert}_{2}}\triangleq{\sqrt{{z^\top}{z}}}$ and ${{\lVert{z}\rVert}_{\infty}}\triangleq{\max\{{|z_1|},\dots,{|z_n|}\}}$, respectively. The $\ell^0$ norm of the vector $z$ is the number of its non-zero elements. Furthermore, ${{\lVert{z}\rVert}_{W}}$ presents the weighted norm of the vector $z$ with respect to the positive definite matrix $W>0$ and is defined as ${{\lVert{z}\rVert}_{W}}\triangleq{\sqrt{{z^\top}W{z}}}$. We symbolize the minimum and maximum eigenvalues of the Hermitian matrix $W$ by  ${{{\lambda}_{\min}}(W)}$ and ${{{\lambda}_{\max}}(W)}$, respectively. Moreover, ${{\sigma}_{\max}}(W)$ is defined as ${{\sigma}_{\max}}(W)\triangleq\sqrt{{{{\lambda}_{\max}}({W^\top}W)}}$.
%*********************************************************************************************************************************************************  
\begin{figure}
	\centering
	\includegraphics[width=7cm,height=5cm]{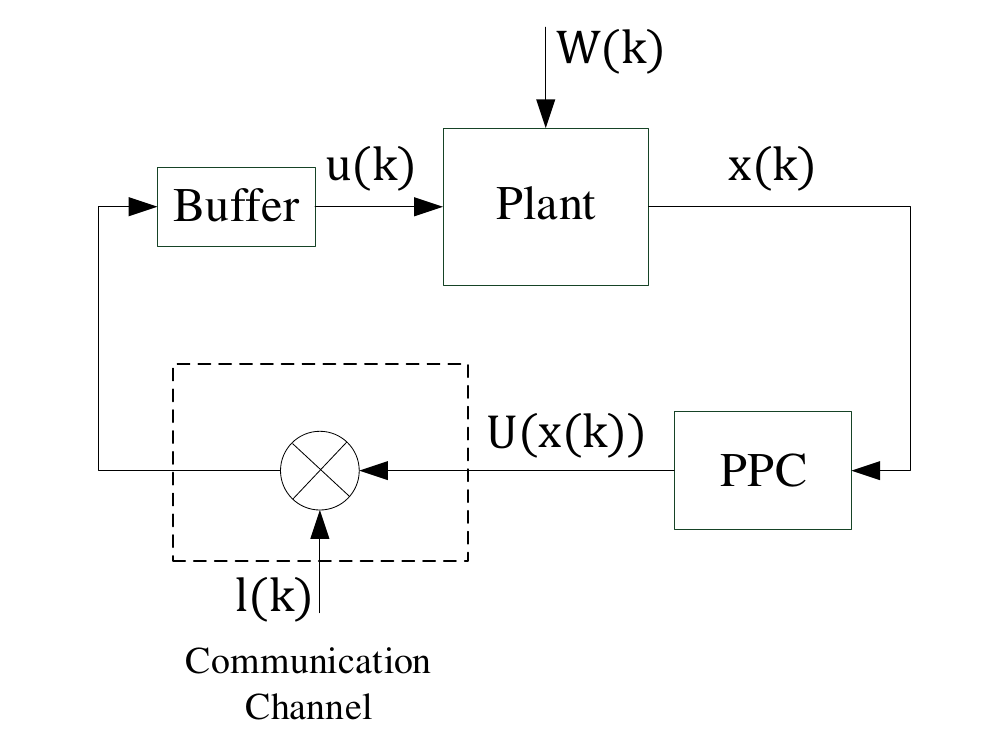}
	\caption{Considered PPC system}
	\label{ppcfig1}
\end{figure} 
\section{PPC for disturbed linear plants}\label{PDL}
The control setting of our interest is depicted by Fig.~\ref{ppcfig1}. One component of the NCS of Fig.~\ref{ppcfig1} is a linear discrete-time plant which is modeled as
\begin{equation}\label{eq1}
x(k+1)=Ax(k)+Bu(k)+w(k),\qquad k\in{{\mathbb{N}}_0}.
\end{equation}  
 In (\ref{eq1}), $x(k)\in{{\mathbb{R}}^n}$ and $u(k)\in{{\mathbb{R}}}$ are the plant state and the control input, respectively. Moreover, $w(k)\in{{\mathbb{W}}}\subset{{\mathbb{R}}^n}$ symbolizes an uncertain disturbance signal where ${{\mathbb{W}}}$ is a compact set for which $0\in{\mathbb{W}}$ holds. It stems from the properties of ${{\mathbb{W}}}$ that 
 \begin{equation}\label{eq1nn}
 {{\lVert{w(k)}\rVert}_{2}}\leq{W_m},\qquad k\in{{\mathbb{N}}_0},
 \end{equation}
 where ${W_m}\triangleq{\max_{\eta\in{\mathbb{W}}}{{{\lVert{\eta}\rVert}_{2}}}}$. So, the $\ell^2$-norm of the disturbance signal is bounded from above at each time instant by a finite value. Furthermore, $A$ and $B$ are state and input matrices. Such matrices are of appropriate dimensions, time-invariant and reachable.

As seen in the feedback loop of Fig.~\ref{ppcfig1}, the plant communicates with the controller over the actuation path via a digital communication channel. This channel is assumed to impose dropouts on the transmitted data. We model such data packet dropouts by the binary random sequence $l(k)$, that is, if $l(k)=1$, the data packet transmitted at time $k$ is received by the actuator at the same time. Otherwise, $l(k)=0$ and the packet is lost. The value of $l(k)$ is supposed to be unknown to the packetized predictive controller in Fig.~\ref{ppcfig1} at any time instant $k\in\mathbb{N}_0$. The packetized predictive controller operates via generating a sequence of control inputs at each time $k\in{\mathbb{N}_0}$. Let us describe such control input vector as follows:  
\begin{equation}\label{eq1n}
{U}({x}(k))=[{u_0}({{x}}(k))\dots {{u}_{N-1}}({{x}}(k))]^\top.
\end{equation}
The controller calculates ${U}({x}(k))$ by using the plant state $x(k)$ . Afterwards, it sends ${U}({x}(k))$ as a data packet to the plant over the communication channel. Suppose that ${U}({x}(k))$ arrives at the plant side of the channel at time $k\in{\mathbb{N}_0}$. First, ${U}({x}(k))$ is stored in a buffer next to the actuator in a way that the previous contents of the buffer are completely overwritten. Then, the actuator applies ${u_0}({{x}}(k))$ to the plant as the control input. If the packet ${U}({x}(k+1))$ is received one time instant later, then ${U}({x}(k+1))$ replaces $U(x(k))$ in the buffer and the actuator applies ${u_0}({{x}}(k+1))$ to the plant. Otherwise, ${u_1}({x}(k))$ will be the control input. Selecting the remaining elements of ${U}({x}(k))$ (${u}_n(x(k))$, $n\geq{2}$) in a successive manner and applying them as the control inputs continues until the arrival of a new control packet (${U}(x(k+n))$, $n\geq{2}$).    

\begin{assumption}\label{ass1}
The maximum number of consecutive packet dropouts is equal to $N-1$. It implies that there is always a fresh control input in the buffer to be applied to the plant. Moreover, the first control packet, $U(x(0))$, is received at the plant side.   
\end{assumption}
As already mentioned, in PPC, the control packets are solutions to the problem of minimizing a finite-horizon cost function at each time step. Suppose that the time is $k\in{\mathbb{N}_0}$ and ${x}={{x}}(k)$ is received by the controller. The PPC cost function has the following formulation:
\begin{equation}\label{eq2}
J({x},{U})=T({\tilde{x}_{N}})+\sum_{i=0}^{N-1}{S({\tilde{x}_{i}},{u_{i}})},
\end{equation}   
in which ${\tilde{x}_{i}}$ denotes a prediction of $x(k+i)$ with the horizon $N$ and $\{u_{i}\}_{i=0}^{N-1}$ are tentative future control inputs. The state prediction is carried out based on the update rule ${\tilde{x}_{i+1}}=A{\tilde{x}_{i}}+B{u_{i}}$ for every $i\in\{0\dots{N-1}\}$ where $\tilde{x}_{0}={{x}}(k)$. So, we use the disturbance-free (nominal) model of the plant for predicting the future states. This is due to the fact that the actual value of $w(k)$ is unknown at every time instant $k\in{\mathbb{N}_0}$. The functions $T$ and $S$ in (\ref{eq2}) are said to be terminal cost and stage cost, respectively.

We seek sparse solutions to the problem of optimizing $J({x},{U})$ in (\ref{eq2}). Towards this goal, we set $J$ together with a constraint in a way that an ${\ell}^2$-constrained ${\ell}^0$ sparsity-promoting optimization gives the control packets. Furthermore, we analyze system stability under the latter sparse PPC policy. The following section provides the details.  
 %********************************************************************************************************************************************************
\section{${\ell}^2$-Constrained ${\ell}^0$ Sparse PPC}
The cost function $J({x},{U})$ for ${\ell}^2$-constrained ${\ell}^0$
sparse PPC is specified by the terminal cost $T({\tilde{x}_{N}})=0$ and the stage cost $S$ which satisfies $S({\tilde{x}_{i}},{u_{i}})=0$ if ${u_{i}}=0$ and $S({\tilde{x}_{i}},{u_{i}})=1$ if $u_{i}\neq{0}$. The argument $U$ of the latter cost function is constrained to remaining in the following set:
 \begin{equation}\label{eq6}
 \upsilon(x)=\{{U}\in{}{{\mathbb{R}}^N}:{{\lVert{{\tilde{x}_{N}}}\rVert}_{P}^{2}}+\sum_{i=0}^{N-1}{{\lVert{{\tilde{x}_{i}}}\rVert}_{Q}^{2}}\leq{{{\lVert{{{x}}}\rVert}_{\Pi}^{2}}}\},
 \end{equation}  
where $x={\tilde{x}_{0}}$ and $U=[{u_{0}}\dots {{u}_{(N-1)}}]^\top$, as presented in the previous section. Moreover, matrices $P$, $Q$, and $\Pi$ are assumed to be positive definite and certify that for all ${x}\in{{\mathbb{R}}^n}$, $\upsilon(x)$ is non-empty. The latter pair of stage and terminal costs together with the set $\upsilon(x)$ in (\ref{eq6}) characterize an optimization problem that outputs the control packets in  ${\ell}^2$-constrained ${\ell}^0$ sparse PPC as follows:  
  \begin{align}\label{eq7}
 \begin{split}
 {U}(x)&=\arg\min_{U\in{\upsilon}(x)}{{{\lVert{{{U}}}\rVert}_{0}}}\\
 \upsilon(x)&=\{U\in{}{{\mathbb{R}}^N}:{{\lVert{M{U}-K{x}}\rVert}_{2}^{2}}\leq{{{\lVert{{{x}}}\rVert}_{\Pi}^{2}}}\}. 
 \end{split}
 \end{align}
 In (\ref{eq7}), $\upsilon(x)$ is reformulated in terms of the $\ell^2$ norm of a function of $x$ and $U$. Such restatement stems from the recursion ${\tilde{x}_{i+1}}=A{\tilde{x}_{i}}+B{u_{i}}$ used for predicting the future states $\{x(k+1),...,x(k+N)\}$ by $\{{\tilde{x}_{1}},...,{\tilde{x}_{N}}\}$ at any time instant $k\in{\mathbb{N}_0}$. Moreover, the matrices $M$ and $K$ are defined as $M\triangleq{\hat{Q}}^{\frac{1}{2}}\Gamma$ and $K\triangleq-{\hat{Q}}^{\frac{1}{2}}\Lambda$, respectively, where $\hat{Q}=\mathrm{diag}\{Q,\ldots,Q,P\}$ and  
\begin{equation}\label{eq4}
{\Gamma}=
\begin{bmatrix}
B & 0 & \ldots & 0\\
AB & B & \ldots & 0\\
\vdots & \vdots & \ddots & \vdots\\
{{A}^{N-1}}B & {{A}^{N-2}}B & \ldots & B 
\end{bmatrix},\qquad
\Lambda=
\begin{bmatrix}
A\\
{A^2}\\
\vdots\\
{A^N}
\end{bmatrix}.
\end{equation}
Due to its combinatorial properties and to avoid exhaustive search over a rather large discrete feasible set, we utilize orthogonal matching pursuit (OMP) algorithm to solve the optimization problem related to ${\ell}^2$-constrained ${\ell}^0$ sparse PPC \cite{nagahara2014sparse}. For such NP hard problem, the greedy iterative algorithm of OMP proves to be an effective method\cite{pati1993orthogonal}.     
\begin{remark}\label{rem1n}
The formulation of the $\ell^0$ optimization problem corresponding to (\ref{eq7}) in terms of $x$, $U$, and weighting matrices is based on the disturbance-free (nominal) model of the plant. So, the structure of the sparsity-promoting optimization considered here for ${\ell}^2$-constrained ${\ell}^0$ sparse PPC is the same as formulation derived in \cite[Section~III-B]{nagahara2014sparse} for ${\ell}^2$-constrained ${\ell}^0$ sparse PPC of disturbance-free LTI discrete-time plants.
\end{remark}
\noindent 
We will make use of the equivalence mentioned in Remark~\ref{rem1n} for deriving the conditions of practical stability in ${\ell}^2$-constrained ${\ell}^0$ sparse PPC of the NCS depicted by Fig.~\ref{ppcfig1}. 
\begin{definition}
	The NCS of Fig.~\ref{ppcfig1} is called practically stable if there exists $\varrho\in{{\mathbb{R}}^{+}}$ that satisfies $\lim\limits_{k\to\infty}{{\lVert{x(k)}\rVert}_{2}}\leq{\varrho}$. 
\end{definition} 
\noindent
We define the $i$-th iterated (open-loop) mapping ${f^i}(\cdot)$, $1\leq{i}\leq{N}$ with optimal vector ${U}$ as
\begin{equation}\label{eq11n2}
{f^i}(x,\bar{w}_0^{i-1})\triangleq{\bar{f}^i}(x)+{g^i}(\bar{w}_0^{i-1}),	
\end{equation}
where
\begin{align}\label{eq11n1}
\begin{split}
&{\bar{f}^i}(x)= {A^i}{{x}}+\sum_{l=0}^{i-1}{A^{i-1-l}}B{u_l}({x})\\
&{g^i}(\bar{w}_0^{i-1})=\sum_{l=0}^{i-1}{A^{i-1-l}}{\bar{w}_l}
\end{split}
\end{align}
 and $\bar{w}_{0}^{i-1}\triangleq[{\bar{w}_0}\dots{\bar{w}_{i-1}}]^\top$. We assume that $\bar{w}_0^{i-1}\in{\mathbb{W}^i}$, $\forall{i\in\{1,\dots,N\}}$ and ${f^0}(\cdot)={\bar{f}^0}(\cdot)=x$.  
 
\begin{lemma}\label{lemma0}
	For every realization of $\bar{w}_0^{i-1}\in{\mathbb{W}^i}$ and for every $i\in\{1,\dots,N\}$, the following holds:
	\begin{equation}\label{eq13n4n2}
	{\lVert{{g^i}(\bar{w}_0^{i-1})}\rVert}_{2}\leq{\gamma_N({{W}_m})},
	\end{equation}
	where
	\begin{equation}\label{eq13n4n2n}
	   {\gamma_N({{W}_m})}\triangleq{\sum_{l=0}^{N-1}{{{\sigma_{\max}(A^{N-1-l})}{{W}_m}}}}.	
	\end{equation}
\end{lemma}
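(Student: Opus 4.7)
The plan is a direct computation with the triangle inequality, the operator-norm bound $\|Mv\|_2\le\sigma_{\max}(M)\,\|v\|_2$, and the assumed disturbance bound \eqref{eq1nn}. The only subtlety is matching the indexing in $\gamma_N(W_m)$ (which runs from $l=0$ to $N-1$) to the sum defining $g^i$ (which runs only up to $i-1$).

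First I would start from the definition
\[
g^i(\bar{w}_0^{i-1}) \;=\; \sum_{l=0}^{i-1} A^{i-1-l}\,\bar{w}_l ,
\]
and apply the triangle inequality for the $\ell^2$ norm to get
\[
\|g^i(\bar{w}_0^{i-1})\|_2 \;\le\; \sum_{l=0}^{i-1} \|A^{i-1-l}\bar{w}_l\|_2 .
\]
Then I would use the standard inequality $\|Mv\|_2\le\sigma_{\max}(M)\|v\|_2$ (which follows from the definition $\sigma_{\max}(M)=\sqrt{\lambda_{\max}(M^\top M)}$ given in Section~II) applied with $M=A^{i-1-l}$ and $v=\bar w_l$, together with \eqref{eq1nn} giving $\|\bar w_l\|_2\le W_m$ since $\bar w_l\in\mathbb{W}$. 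This yields
\[
\|g^i(\bar{w}_0^{i-1})\|_2 \;\le\; \sum_{l=0}^{i-1} \sigma_{\max}(A^{i-1-l})\,W_m .
\]

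Next I would reconcile this with the target bound $\gamma_N(W_m)$. Re-indexing the last sum by $j=i-1-l$ gives
\[
\sum_{l=0}^{i-1}\sigma_{\max}(A^{i-1-l})\,W_m \;=\; W_m\sum_{j=0}^{i-1}\sigma_{\max}(A^{j}),
\]
and the same re-indexing applied to \eqref{eq13n4n2n} shows $\gamma_N(W_m)=W_m\sum_{j=0}^{N-1}\sigma_{\max}(A^{j})$. Since $1\le i\le N$ and every term $\sigma_{\max}(A^{j})$ is non-negative, the partial sum up to $i-1$ is dominated by the partial sum up to $N-1$, which gives the required inequality \eqref{eq13n4n2}.

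There is no real obstacle here; the only point to be careful about is that the bound $\gamma_N(W_m)$ is defined using the fixed horizon $N$ rather than the running index $i$, so the argument must explicitly observe that enlarging the upper limit of summation from $i-1$ to $N-1$ can only increase the value because the summands are non-negative. The result holds for every realization $\bar{w}_0^{i-1}\in\mathbb{W}^i$ because the bound $\|\bar w_l\|_2\le W_m$ is uniform in $l$.
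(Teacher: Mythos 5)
Your proposal is correct and follows essentially the same route as the paper's proof: triangle inequality, the operator-norm bound together with \eqref{eq1nn}, and then the observation that the resulting sum only grows when the upper limit is extended from $i-1$ to $N-1$. The only difference is that you make the re-indexing $j=i-1-l$ explicit, which cleanly justifies the paper's terser claim that the bound is ``increasing with respect to $i$.''
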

\begin{proof}
It follows from (\ref{eq11n1}) and the triangle inequality that
\begin{equation}\label{eq13n4n}
{\lVert{{g^i}(\bar{w}_0^{i-1})}\rVert}_{2}\leq{\sum_{l=0}^{i-1}{{\lVert{A^{i-1-l}}{\bar{w}_l}}\rVert}_{2}}
\end{equation}
holds regardless of the realization of $\bar{w}_0^{i-1}\in{\mathbb{W}^i}$, $\forall{i\in\{1,\dots,N\}}$. Then, the claim follows from (\ref{eq1nn}), properties of weighted norms and the fact that the right-hand-side of (\ref{eq13n4n}) is an increasing function with respect to $i$.
\end{proof}
\begin{lemma}\label{lemma4}
Define ${{\upsilon}^{*}}(x)$ as
\begin{equation}\label{eq39}
{\upsilon^{*}}(x)\triangleq\{U\in{{\mathbb{R}}^N}:{{\lVert{MU-Kx}\rVert}_{2}^{2}}\leq{{{\lVert{{{x}}}\rVert}_{{\Pi}^*}^{2}}}\},
\end{equation}
where
\begin{equation}\label{eq10}
{{\Pi}^{\star}}\triangleq{{K^\top}(I-M{M^{\dagger}})M},\quad{M^\dagger}\triangleq{({M^\top}M)^{-1}{M^\top}}.
\end{equation}
Then ${\upsilon}(x)\supseteq{\upsilon^{*}}(x)$ holds if $\Pi\geq{\Pi^{*}}$, where $\Pi$ is related to ${\upsilon}(x)$ through (\ref{eq7}). Furthermore, given $\Pi\geq{\Pi^{*}}$, the feasible set	${\upsilon}(x)$ will be closed, convex and non-empty over ${{\mathbb{R}}^N}$.
\end{lemma}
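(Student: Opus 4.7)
The lemma bundles together four separate assertions---the set inclusion $\upsilon(x) \supseteq \upsilon^{*}(x)$, and the closedness, convexity, and non-emptiness of $\upsilon(x)$---so the plan is to dispatch each on its own in that order.

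First, for the containment, I would read the matrix inequality $\Pi \geq \Pi^{*}$ in the Loewner sense, which is precisely $x^\top \Pi\, x \geq x^\top \Pi^{*} x$ for every $x \in \mathbb{R}^{n}$, i.e.\ $\|x\|_\Pi^{2} \geq \|x\|_{\Pi^{*}}^{2}$. Any $U \in \upsilon^{*}(x)$ then satisfies $\|MU - Kx\|_2^{2} \leq \|x\|_{\Pi^{*}}^{2} \leq \|x\|_\Pi^{2}$, putting it into $\upsilon(x)$. This is essentially a one-liner.

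Second, for non-emptiness, I would identify $\|x\|_{\Pi^{*}}^{2}$ with the minimum residual of the unconstrained least-squares problem $\min_{U} \|MU - Kx\|_2^{2}$. Here one needs $M = \hat{Q}^{1/2}\Gamma$ to have full column rank, which follows because $\hat{Q} > 0$ and the block-lower-triangular structure of $\Gamma$ with $B$ on the diagonal forces its columns to be linearly independent (the first block equation gives $c_1 B = 0$, then the second gives $c_2 B = 0$, and so on, each forcing the scalar coefficient to vanish once $B \neq 0$). Under full column rank, $M^{\dagger} = (M^\top M)^{-1}M^\top$ is a genuine left inverse, and $U^{\star} \triangleq M^{\dagger} K x$ attains $\|MU^{\star} - Kx\|_2^{2} = x^\top K^\top (I - MM^{\dagger}) K x$. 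Modulo what appears to be a typographical slip in (\ref{eq10}) (the trailing factor $M$ versus $K$), this equals $\|x\|_{\Pi^{*}}^{2}$, so $U^{\star} \in \upsilon^{*}(x) \subseteq \upsilon(x)$ by the first part.

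Third, for closedness and convexity, the map $\phi(U) \triangleq \|MU - Kx\|_2^{2}$ is continuous and convex in $U$ (its Hessian is $2 M^\top M \geq 0$), so $\upsilon(x)$ is simultaneously the preimage of the closed half-line $(-\infty, \|x\|_\Pi^{2}]$ under a continuous function and a sublevel set of a convex function---hence closed and convex by standard convex analysis. I do not anticipate any real obstacle; the only step calling for care beyond routine manipulation is pinning down $\Pi^{*}$ as the optimal-residual matrix of the unconstrained least-squares problem, which rests on full column rank of $M$ and on reconciling the printed formula for $\Pi^{*}$ in (\ref{eq10}) with the least-squares identity.
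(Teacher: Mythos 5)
Your proof is correct, but it takes a genuinely different route from the paper. The paper does not argue the claim directly at all: it observes (via Remark~\ref{rem1n}) that the constraint set $\upsilon(x)$ and the matrix $\Pi^{*}$ are formally identical to the corresponding objects in the disturbance-free setting of the cited reference \cite[Section~III-B, Lemma~10]{nagahara2014sparse}, and concludes the lemma ``immediately'' from that result, since the statement concerns only the geometry of the feasible set and is unaffected by the presence of the disturbance. You instead reconstruct the argument from first principles: the Loewner ordering $\Pi\geq\Pi^{*}$ gives the inclusion in one line; non-emptiness follows by exhibiting the least-squares solution $U^{\star}=M^{\dagger}Kx$, whose residual is exactly $\lVert x\rVert_{\Pi^{*}}^{2}$; and closedness and convexity follow because $\upsilon(x)$ is a sublevel set of a continuous convex quadratic. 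What your route buys is substantial: it makes explicit the full-column-rank requirement on $M$ (which you correctly derive from $\hat{Q}>0$, the block-lower-triangular structure of $\Gamma$, and $B\neq 0$, the latter guaranteed by reachability) that underlies the invertibility of $M^{\top}M$ in the definition of $M^{\dagger}$, and it catches a genuine typographical error in (\ref{eq10}): as printed, $K^{\top}(I-MM^{\dagger})M$ is an $n\times N$ matrix, which cannot serve as the weight of a quadratic form on $\mathbb{R}^{n}$ or be compared with $\Pi$ in the Loewner order; the trailing $M$ must be $K$, so that $\Pi^{*}=K^{\top}(I-MM^{\dagger})K$ is the symmetric positive-semidefinite optimal-residual matrix (using that $I-MM^{\dagger}$ is a symmetric idempotent). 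The paper's citation-based proof inherits the correct formula from the reference and so glosses over this; your self-contained version is the more informative of the two.
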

\begin{proof}
One implication from Remark~\ref{rem1n} is that the set $\upsilon(x)$ and $\mathcal{U}(x)$ in \cite[Section~III-B]{nagahara2014sparse} have the same definitions as functions of $x$. Moreover, the matrix $\Pi^{*}$  has a structure identical to $W^*$ in \cite[Lemma 10]{nagahara2014sparse} considered for the disturbance-free system. Hence, we can conclude the claim immediately from \cite[Lemma~10]{nagahara2014sparse}.   
\end{proof}
\begin{lemma}\label{lemma5}
	Define the matrix $\xi>0$ as $\xi\triangleq{\Pi-{\Pi^{*}}}$. Then, there exists $\psi(x)\in{{\mathbb{R}}^N}$ for any feasible control packet $U\in{\upsilon}(x)$ in such a way that
		\begin{equation}\label{eq41}
	U={U^{*}}+\psi(x)
	\end{equation}
	holds where
	\begin{equation}\label{eq41s}
	{{\lVert{M\psi(x)}\rVert}_{2}^{2}}\leq{{\lVert{{{x}}}\rVert}_{\xi}^{2}}.
	\end{equation}
	and ${U^{*}}\in{{\upsilon^{*}}(x)}$. 	
\end{lemma}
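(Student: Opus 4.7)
The plan is to exhibit the decomposition explicitly by taking $U^{*}$ to be the unconstrained least-squares minimizer of $\|MU-Kx\|_2^2$, namely $U^{*}=M^{\dagger}Kx$, and then letting $\psi(x)\triangleq U-U^{*}$ so that $U=U^{*}+\psi(x)$ trivially holds. First I would verify that this choice of $U^{*}$ indeed lies in $\upsilon^{*}(x)$: a direct substitution gives $\|MU^{*}-Kx\|_2^2=\|(MM^{\dagger}-I)Kx\|_2^2$, and since $MM^{\dagger}=M(M^{\top}M)^{-1}M^{\top}$ is the orthogonal projector onto the column space of $M$, the matrix $I-MM^{\dagger}$ is symmetric and idempotent. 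Hence $\|MU^{*}-Kx\|_2^2=x^{\top}K^{\top}(I-MM^{\dagger})Kx=\|x\|_{\Pi^{*}}^{2}$ (up to the apparent typo in~\eqref{eq10}, where the trailing $M$ should read $K$ for the dimensions to match), so the constraint in~\eqref{eq39} is met with equality.

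The core step is then to decompose $\|MU-Kx\|_2^2$ along the direction of $M\psi(x)$. Writing
\begin{equation*}
MU-Kx=M(U^{*}+\psi(x))-Kx=-(I-MM^{\dagger})Kx+M\psi(x),
\end{equation*}
I would expand the squared norm and observe that the cross term vanishes because
\begin{equation*}
(I-MM^{\dagger})M=M-M(M^{\top}M)^{-1}M^{\top}M=0.
\end{equation*}
This orthogonality is the crux of the argument. It yields
\begin{equation*}
\|MU-Kx\|_{2}^{2}=\|x\|_{\Pi^{*}}^{2}+\|M\psi(x)\|_{2}^{2}.
\end{equation*}

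Finally, using $U\in\upsilon(x)$ together with the identity $\|x\|_{\Pi}^{2}=\|x\|_{\Pi^{*}}^{2}+\|x\|_{\xi}^{2}$ (which follows from $\xi=\Pi-\Pi^{*}$), I would conclude
\begin{equation*}
\|x\|_{\Pi^{*}}^{2}+\|M\psi(x)\|_{2}^{2}=\|MU-Kx\|_{2}^{2}\leq\|x\|_{\Pi}^{2}=\|x\|_{\Pi^{*}}^{2}+\|x\|_{\xi}^{2},
\end{equation*}
which cancels to give $\|M\psi(x)\|_{2}^{2}\leq\|x\|_{\xi}^{2}$, establishing~\eqref{eq41s}. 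The main obstacle is essentially bookkeeping rather than conceptual: one must be careful that $M^{\top}M$ is invertible so that $M^{\dagger}$ is well defined and $MM^{\dagger}$ is a genuine orthogonal projector, which is the structural ingredient that kills the cross term. This invertibility is inherited from the reachability assumption on $(A,B)$ and the positive definiteness of $\hat{Q}$ (since $M=\hat{Q}^{1/2}\Gamma$ and $\Gamma$ has full column rank by reachability), so no additional hypothesis is needed beyond what has already been assumed.
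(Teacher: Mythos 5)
Your proof is correct, and it is genuinely more informative than the paper's own argument: the paper disposes of this lemma in two lines by observing (via Remark~\ref{rem1n}) that the constraint set $\upsilon(x)$ is formulated identically to the disturbance-free case and then citing the corresponding lemma of the reference, whereas you reconstruct the underlying argument from scratch. Your route --- take $U^{*}=M^{\dagger}Kx$ as the least-squares point, note that $I-MM^{\dagger}$ is the orthogonal projector onto the complement of the column space of $M$, and use $(I-MM^{\dagger})M=0$ to kill the cross term so that $\lVert MU-Kx\rVert_{2}^{2}=\lVert x\rVert_{\Pi^{*}}^{2}+\lVert M\psi(x)\rVert_{2}^{2}$ --- is exactly the structural content hidden behind the citation, and the final cancellation against $\lVert x\rVert_{\Pi}^{2}=\lVert x\rVert_{\Pi^{*}}^{2}+\lVert x\rVert_{\xi}^{2}$ is clean. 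What your version buys is self-containedness and two useful observations the paper glosses over: first, that the trailing $M$ in the definition of $\Pi^{*}$ in (\ref{eq10}) must be a $K$ for $\Pi^{*}$ to be a square $n\times n$ weighting matrix (this is indeed a typo relative to the cited source); second, that the well-definedness of $M^{\dagger}$ rests on $\Gamma$ having full column rank, which you correctly trace back to the standing assumptions (for the single-input case here, $B\neq 0$ already forces full column rank of the lower-triangular block structure of $\Gamma$, so reachability is more than enough). The only thing the paper's approach buys in exchange is brevity.
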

\begin{proof}
	As noted in Remark~\ref{rem1n}, elements of the optimization problem (the cost function and constraint) are defined in the same way across (\ref{eq7}) and ${\ell}^2$-constrained ${\ell}^0$ sparse PPC of disturbance-free plants in \cite[Section ~III-B]{nagahara2014sparse}. Therefore, the claim results immediately from \cite[Lemma 11]{nagahara2014sparse}.
\end{proof}
\noindent
Let $P>0$ be the solution to the following Riccati equation:
\begin{equation}\label{eq12}
P={A^\top}PA-{A^\top}PB{{({B^\top}PB+r)}^{-1}}{B^\top}PA+Q,
\end{equation}
where $Q>0$ is chosen arbitrarily and $r=0$. Then according to \cite[Chapter 3]{bertsekas1976dynamic}, the elements of every feasible control packet ${U\in{\upsilon}(x)}$ are obtained via
\begin{equation}\label{eq41n}
{u_i}(x)={F(A+BF)^i}x+{\psi_i}(x),\quad i=0,1,\dots,N-1,
\end{equation} 
where ${\psi_i}(x)$ denotes the $i+1$-th element of ${\psi}(x)$ defined in Lemma~\ref{lemma5}. Moreover, the gain $F$ is calculated as
\begin{equation}\label{eq41n1}
F=-{({B^\top}PB)^{-1}{B^\top}PA}.
\end{equation}
In this case,  the $i$-th iterated (open-loop) mapping ${f^i}$ follows
\begin{align}\label{eq41n2}
\begin{split}
{f^{i+1}}(x,\bar{w}_0^{i})=(A+BF){f^{i}}(x,\bar{w}_0^{i-1})+B{\varsigma_i}(x)+{\varrho_i}(\bar{w}_0^{i}),
\end{split}
\end{align}	
where $\bar{w}_0^{i}\in{\mathbb{W}^{i+1}}$, $\forall{i\in\{0,\dots,N-1\}}$. Moreover, ${f^0}(\cdot)=x$ and the functions ${\varsigma_i}(x)$ and ${\varrho_i}(\bar{w}_0^{i})$ are defined as 
\begin{equation}\label{eq41n4n}
{\varsigma_i}(x)\triangleq{({B^\top}PB)^{-1}{B^\top}P}{\Gamma_i}{{\psi}(x)}
\end{equation}
and
\begin{equation}\label{eq41n4}
{\varrho_i}(\bar{w}_0^{i})\triangleq{\bar{w}_i}-\sum_{l=0}^{i-1}{BFA^{i-1-l}}{\bar{w}_l}
\end{equation}
for any $i\in\{0,\dots,N-1\}$, respectively. In (\ref{eq41n4n}), ${\Gamma_i}\in\mathbb{R}^{n\times{N}}$ denotes the $i+1$-th row block of matrix $\Gamma$ defined in (\ref{eq4}).
\begin{lemma}\label{lemma5n}
	For every realization of $\bar{w}_0^{i}\in{\mathbb{W}^{i+1}}$ and for every $i\in\{0,\dots,N-1\}$, the following holds:
	\begin{equation}\label{eq41n5}
	{\lVert{{\varrho_i}(\bar{w}_0^{i})}\rVert}_{2}\leq{\varepsilon_N({{W}_m})},
	\end{equation}
	where
\begin{equation}\label{eq41n5nn}
{\varepsilon_N({{W}_m})}\triangleq{\big[1+\sum_{l=0}^{N-1}{{{\sigma_{\max}(B{({B^\top}PB)^{-1}{B^\top}P}A^{N-l})}\big]{{W}_m}}}}.
\end{equation}
\end{lemma}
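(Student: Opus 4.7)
The plan is to mirror the structure of the proof of Lemma~\ref{lemma0}: apply the triangle inequality to the defining sum for ${\varrho_i}(\bar{w}_0^{i})$, bound each term using the operator-norm inequality ${\lVert{Mz}\rVert}_{2}\leq{\sigma_{\max}(M)}{\lVert{z}\rVert}_{2}$ together with (\ref{eq1nn}), and then compare the resulting partial sum (running over $l=0,\dots,i-1$) with the full sum (running over $l=0,\dots,N-1$) that appears in the definition (\ref{eq41n5nn}) of ${\varepsilon_N({{W}_m})}$.

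The first step is to rewrite $BFA^{i-1-l}$ in a form that matches the terms in $\varepsilon_N(W_m)$. Substituting the expression (\ref{eq41n1}) for $F$ into $BF$ gives $BFA^{i-1-l}=-B{({B^\top}PB)^{-1}{B^\top}P}A^{i-l}$. Applied to (\ref{eq41n4}), together with the triangle inequality and ${\lVert{\bar{w}_l}\rVert}_{2}\leq{W_m}$ from (\ref{eq1nn}), this yields
\begin{equation*}
{\lVert{{\varrho_i}(\bar{w}_0^{i})}\rVert}_{2}\leq{W_m}+\sum_{l=0}^{i-1}{\sigma_{\max}\bigl(B{({B^\top}PB)^{-1}{B^\top}P}A^{i-l}\bigr){W_m}}.
\end{equation*}

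The final step is to argue that the right-hand side above is non-decreasing in $i$, so that the case $i=N-1$ dominates. Re-indexing the inner sum with $k=i-l$ makes its exponents of $A$ run over $\{1,\dots,i\}\subseteq\{1,\dots,N\}$, and the same change of variable ($k=N-l$) on the sum defining $\varepsilon_N(W_m)$ shows that the latter is exactly $[1+\sum_{k=1}^{N}\sigma_{\max}(B({B^\top}PB)^{-1}{B^\top}P A^{k})]W_m$. Since every summand is non-negative, the partial sum over $k=1,\dots,i$ is bounded by the full sum over $k=1,\dots,N$, which gives the claim for every $i\in\{0,\dots,N-1\}$.

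The proof is essentially routine; the main point requiring care is bookkeeping of the two index shifts (first absorbing $A$ from $F$ into the exponent, then matching $l$-indexed and $k$-indexed sums). No subtler obstacle is foreseen, since the uniform disturbance bound (\ref{eq1nn}) and submultiplicativity of $\sigma_{\max}$ already do all the analytic work.
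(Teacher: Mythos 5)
Your proposal is correct and follows essentially the same route as the paper's proof: substitute (\ref{eq41n1}) into (\ref{eq41n4}), apply the triangle inequality and ${\lVert{Mz}\rVert}_{2}\leq{\sigma_{\max}(M)}{\lVert{z}\rVert}_{2}$ with the bound (\ref{eq1nn}), and then dominate the partial sum by the full sum in (\ref{eq41n5nn}). Your explicit re-indexing is in fact a slightly cleaner justification of the paper's appeal to monotonicity in $i$; the only cosmetic quibble is that the final step uses the operator-norm inequality rather than ``submultiplicativity'' of $\sigma_{\max}$.
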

\begin{proof}
	It stems from (\ref{eq41n1}), (\ref{eq41n4}) and triangle inequality that 
	\begin{equation}\label{eqnew}
	{\lVert{{\varrho_i}(\bar{w}_0^{i})}\rVert}_{2}\leq{\lVert{{\bar{w}_i}}\rVert_{2}}+{{\sum_{l=0}^{i-1}\lVert{(B{({B^\top}PB)^{-1}{B^\top}P}A^{i-l})}{\bar{w}_l}}\rVert_{2}},
	\end{equation}
	where $i=0,1,\dots,N-1$ and $\bar{w}_0^{i}\in{\mathbb{W}^{i+1}}$. Now, the claim follows from (\ref{eq1nn}), properties of euclidean norms and the fact that the right-hand-side of (\ref{eqnew}) is an increasing function with respect to $i$.
\end{proof}  
\begin{lemma}\label{lemma6}
	Define ${V_P}(x)\triangleq{{\lVert{{{x}}}\rVert}_{P}^{2}}$. Choose $\Pi>0$ in such a way that $\Pi>{\Pi}^{*}$ holds. Set an arbitrary $Q>0$ and let $P>0$ be the solution of the Riccati equation (\ref{eq12}) with $r=0$. Then one can find constants $0\leq\varphi_1<1$ and $c_1>0$ certifying that
	\begin{equation}\label{eq42}
	\begin{split}
	\sqrt{{V_P}({f^{i+1}}(x,\bar{w}_0^{i}))}\leq&\Big[{{\varphi_1}}+\frac{{\sqrt{{c_1}\lambda_{\max}(\xi)}}}{(1-\varphi_1)\sqrt{\lambda_{\min}(P)}}\Big]\sqrt{{V_P}(x)}\\
	&+\big[(1-\varphi_1)^{-1}{\sqrt{\lambda_{\max}(P)}}\big]{\varepsilon_N({{W}_m})},
	\end{split}
	\end{equation}
 holds for every $x\in{\mathbb{R}^n}$ and every realization  $\bar{w}_0^{i}\in{\mathbb{W}^{i+1}}$, $i=1,\dots,N-1$. In (\ref{eq42}), $\xi$ is defined as in Lemma~\ref{lemma5}.   
\end{lemma}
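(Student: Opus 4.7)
My plan is to substitute the one-step recursion \eqref{eq41n2} into the Lyapunov candidate $V_P$, derive a scalar recursion of the form $a_{i+1}\leq \varphi_1 a_i + (\text{forcing})$ for $a_i:=\sqrt{V_P(f^i(x,\bar w_0^{i-1}))}$ with $a_0=\sqrt{V_P(x)}$, and then close the bound via a geometric-series argument. The three ingredients will be (i) contraction of $A+BF$ in the $P$-weighted norm, (ii) a uniform-in-$i$ state-dependent bound on $\|B\varsigma_i(x)\|_P$ through Lemma~\ref{lemma5}, and (iii) the disturbance bound supplied by Lemma~\ref{lemma5n}.

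For (i), I will exploit that with $r=0$ in the Riccati equation \eqref{eq12} and $F$ chosen as in \eqref{eq41n1}, a direct expansion yields the identity $(A+BF)^\top P(A+BF)=P-Q$. Consequently, for every $z\in\mathbb{R}^n$, $\|(A+BF)z\|_P^2\leq \|z\|_P^2-\|z\|_Q^2\leq(1-\lambda_{\min}(Q)/\lambda_{\max}(P))\|z\|_P^2$, so that $\varphi_1:=\sqrt{1-\lambda_{\min}(Q)/\lambda_{\max}(P)}\in[0,1)$ is a valid contraction rate (note $Q>0$ gives strictness, while $P-Q\geq 0$ from the identity itself gives nonnegativity). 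Taking $\|\cdot\|_P$ of both sides of \eqref{eq41n2} and applying the triangle inequality then produces $a_{i+1}\leq \varphi_1 a_i+\|B\varsigma_i(x)\|_P+\sqrt{\lambda_{\max}(P)}\,\|\varrho_i(\bar w_0^i)\|_2$.

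For (ii), I will use \eqref{eq41n4n} together with the factorization $M=\hat Q^{1/2}\Gamma$. Lemma~\ref{lemma5} gives $\|M\psi(x)\|_2^2\leq\|x\|_\xi^2$, hence $\|\Gamma\psi(x)\|_2^2\leq \|x\|_\xi^2/\lambda_{\min}(\hat Q)$, and bounding $\|\Gamma_i\psi(x)\|_2\leq\|\Gamma\psi(x)\|_2$ removes the $i$-dependence. Combining this with $\|B\varsigma_i(x)\|_P\leq\sqrt{\lambda_{\max}(P)}\,\sigma_{\max}(B(B^\top PB)^{-1}B^\top P)\,\|\Gamma_i\psi(x)\|_2$ produces a bound $\|B\varsigma_i(x)\|_P\leq \sqrt{c_1'}\,\|x\|_\xi$ uniform in $i$, where $c_1'>0$ collects $\lambda_{\max}(P)$, $\lambda_{\min}(\hat Q)^{-1}$, $\sigma_{\max}(\Gamma)^2$ and $\sigma_{\max}(B(B^\top PB)^{-1}B^\top P)^2$. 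Using the universal estimate $\|x\|_\xi\leq\sqrt{\lambda_{\max}(\xi)/\lambda_{\min}(P)}\sqrt{V_P(x)}$ converts this to $\|B\varsigma_i(x)\|_P\leq\sqrt{c_1\lambda_{\max}(\xi)/\lambda_{\min}(P)}\sqrt{V_P(x)}$ for a suitable constant $c_1>0$. Step (iii) is immediate from Lemma~\ref{lemma5n}.

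Putting the three pieces together yields $a_{i+1}\leq \varphi_1 a_i+\alpha\sqrt{V_P(x)}+\beta\,\varepsilon_N(W_m)$ with $\alpha=\sqrt{c_1\lambda_{\max}(\xi)/\lambda_{\min}(P)}$ and $\beta=\sqrt{\lambda_{\max}(P)}$. Unrolling from $a_0=\sqrt{V_P(x)}$ gives $a_{i+1}\leq \varphi_1^{i+1}a_0+(\alpha\sqrt{V_P(x)}+\beta\varepsilon_N(W_m))\sum_{k=0}^{i}\varphi_1^{k}$, and the bounds $\varphi_1^{i+1}\leq\varphi_1$ (valid since $\varphi_1\in[0,1)$ and $i\geq 0$) and $\sum_{k=0}^{i}\varphi_1^{k}\leq(1-\varphi_1)^{-1}$ deliver exactly \eqref{eq42}. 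I expect the main obstacle to lie in (ii): manufacturing a single constant $c_1$ that is uniform in $i\in\{0,\dots,N-1\}$ despite the $i$-dependent row block $\Gamma_i$, which I would handle by majorising $\|\Gamma_i\psi(x)\|_2$ by $\|\Gamma\psi(x)\|_2$ and relying on reachability of $(A,B)$ (making $M^\top M=\Gamma^\top\hat Q\Gamma$ invertible) to control $\psi(x)$ through $\|x\|_\xi$. Once this is in hand, the remainder is a routine contraction-with-forcing closure.
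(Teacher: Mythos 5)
Your proposal is correct and follows essentially the same route as the paper's proof: the same decomposition of $f^{i+1}$ via (\ref{eq41n2}) into $(A+BF)f^{i}+B\varsigma_i(x)+\varrho_i$, a one-step contraction-plus-forcing inequality that controls the $B\varsigma_i(x)$ term through Lemma~\ref{lemma5} and the disturbance term through Lemma~\ref{lemma5n}, and a geometric-series closure using $\varphi_1^{\,i+1}\leq\varphi_1$ and $\sum_{k}\varphi_1^{k}\leq(1-\varphi_1)^{-1}$. The only difference is that you derive the one-step inequality from scratch (via the closed-loop Riccati identity $(A+BF)^{\top}P(A+BF)=P-Q$ and the factorization $M=\hat{Q}^{1/2}\Gamma$), whereas the paper imports it from Lemma~13 of \cite{nagahara2014sparse}; this gives slightly different admissible constants ($\varphi_1=\sqrt{1-\lambda_{\min}(Q)/\lambda_{\max}(P)}$ versus the paper's $\sqrt{1-\lambda_{\min}(QP^{-1})}$, and a differently assembled $c_1$), which is immaterial since the lemma only asserts the existence of such constants.
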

\begin{proof}
	Let us define ${\Omega_i}(x,\bar{w}_0^{i-1})$ as
	\begin{align}\label{42n}
{\Omega_i}(x,\bar{w}_0^{i-1})\triangleq{(A+BF){f^{i}}(x,\bar{w}_0^{i-1})+B{\varsigma_i}(x)}
	\end{align} 	
 Then based on the triangle inequality of weighted norms \cite[Part I-Chapter~5]{poznyak2009advanced} and (\ref{eq41n2}), we have
	\begin{equation}\label{42n1}
	{{\lVert{{{f^{i+1}}(x,\bar{w}_0^{i})}}}\rVert}_{P}\leq{{\lVert{{{{\Omega_i}(x,\bar{w}_0^{i-1})}}}\rVert}_{P}+{\lVert{{{{\varrho_i}(\bar{w}_0^{i})}}}\rVert}_{P}}.	
	\end{equation}
	Considering Remark~\ref{rem1n}, we can conclude from the proof of \cite[Lemma~13]{nagahara2014sparse} and (\ref{eq41n5}) that 
	\begin{equation}\label{eq42n2}
	\begin{split}
		\sqrt{{V_P}({f^{i+1}}(x,\bar{w}_0^{i}))}\leq&\sqrt{\rho{V_P}({f^{i}}(x,\bar{w}_0^{i-1}))+{c_1}{{\lVert{{{x}}}\rVert}_{\xi}^{2}}}\\
		&+\Big({\sqrt{\lambda_{\max}(P)}}\Big){\varepsilon_N({{W}_m})}
	\end{split}
	\end{equation}
	is valid for every realization of $\bar{w}_0^{i}\in{\mathbb{W}^{i+1}}$ ,$\forall i\in\{0,1,\dots,N-1\}$. In (\ref{eq42n2}), $\rho$ and $c_1$ are defined as follows:
	\begin{equation}\label{eq42n3}
	\begin{split}
	\rho&\triangleq{1-\lambda_{\min}(QP^{-1})}\\
{c_1}&\triangleq{{\max_{i=0,1,\dots,N-1}}{\lambda_{\max}}}\Big({\Gamma_i^\top}P{\Gamma_i}(M^\top M)^{-1}\Big).
	\end{split}
	\end{equation}
	 Since $P\geq{Q}>0$, then $\rho\in[0,1)$ and ${c_1}>0$. Define $\varphi_1$ as $\varphi_1\triangleq\sqrt{\rho}$ and note that $f^0(\cdot)=x$. Now, based on the fact that $\sqrt{a+b}\leq{\sqrt{a}+\sqrt{b}}$, $\forall{a,b}\geq0$, and by mathematical induction, we can restate (\ref{eq42n2}) as 
	\begin{equation}\label{eq42n6}
	\begin{split}
		\sqrt{{V_P}({f^{i+1}}(x,\bar{w}_0^{i}))}\leq&{\varphi_1^i}\sqrt{{V_P}(x)}+(1+\dots+\varphi_{1}^{i-1})\\
	&\times\big[\sqrt{{c_1}}{{\lVert{{{x}}}\rVert}_{\xi}}+\Big({\sqrt{\lambda_{\max}(P)}}\Big){\varepsilon_N({{W}_m})}\big].
	\end{split}
	\end{equation}
Employing $0\leq\varphi_1<1$, properties of weighted norms and (\ref{eq42n6}), we derive (\ref{eq42}). The proof is complete now.
\end{proof}
\noindent
In the following theorem, the upper bound derived in Lemma~\ref{lemma6} will assist us establishing conditions of practical stability for the NCS of Fig.~\ref{ppcfig1} under $\ell^2$-constrained ${\ell}^0$ sparse PPC strategy presented by (\ref{eq7}) .   
\begin{theorem}\label{th2}
	 Select $Q>0$ arbitrarily and obtain $P>0$ as the solution to the Riccati equation (\ref{eq12}) with $r=0$. Choose the matrix $\xi$ in such a way that 
	\begin{equation}\label{eq42n8}
	\sqrt{\lambda_{\max}(\xi)}<\frac{(1-\varphi_1)^2\sqrt{\lambda_{\min}(P)}}{\sqrt{c_1}}
	\end{equation}  	
	holds where constants $\varphi_1\in[0,1)$ and $c_1>0$ are defined in the proof of Lemma~\ref{lemma6}. Calculate ${\Pi^*}$ as ${\Pi^*}=P-Q$ and set $\Pi$ as $\Pi={\Pi^*}+\xi$. Then, the  ${\ell}^2$-constrained ${\ell}^0$ sparse PPC characterized by (\ref{eq7}) gives control packets ${U}$ in such a way that ${{\lVert{x(k)}\rVert}_{2}}$ is bounded at each time instant $k\in{{\mathbb{N}}_0}$ and 
	\begin{equation}\label{eq42n8n}
	\lim\limits_{k\to\infty}{{\lVert{x(k)}\rVert}_{2}}\leq{\Psi_1},
	\end{equation}
	where
	\begin{equation}\label{eq42n8n1}
	\begin{split}
	{\Psi_1}\triangleq&{\Big[{(1-\varphi_1)^2}\sqrt{\lambda_{\min}(p)}-\sqrt{c_1\lambda_{\max}(\xi)}\Big]}^{-1}\\
	&\times\sqrt{\lambda_{\max}(P)}{\varepsilon_N({{W}_m})}
	\end{split}
	\end{equation}
\end{theorem}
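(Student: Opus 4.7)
The plan is to establish a contractive Lyapunov recursion on $V_P$ sampled at the (random) instants at which fresh control packets reach the actuator, and then transfer the resulting bound to every intermediate time step using the uniform-in-$i$ bound supplied by Lemma~\ref{lemma6}.

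First, I would use Assumption~\ref{ass1} to enumerate the times $0=k_0<k_1<k_2<\cdots$ at which the actuator receives a fresh packet, and set $i_j\triangleq k_{j+1}-k_j\in\{1,\dots,N\}$. Between $k_j$ and $k_{j+1}$ the closed-loop trajectory obeys (\ref{eq41n2}), so that $x(k_j+\ell)=f^{\ell}(x(k_j),\bar w_0^{\ell-1})$ for $\ell=1,\dots,i_j$, with the actually realized disturbances $\bar w_l$. Writing
\begin{equation*}
\alpha\triangleq \varphi_1+\frac{\sqrt{c_1\lambda_{\max}(\xi)}}{(1-\varphi_1)\sqrt{\lambda_{\min}(P)}},\qquad \beta\triangleq\frac{\sqrt{\lambda_{\max}(P)}\,\varepsilon_N(W_m)}{1-\varphi_1},
\end{equation*}
Lemma~\ref{lemma6} yields $\sqrt{V_P(f^{\ell}(x,\bar w_0^{\ell-1}))}\le\alpha\sqrt{V_P(x)}+\beta$ for every $\ell\in\{2,\dots,N\}$, the key point being that neither $\alpha$ nor $\beta$ depends on $\ell$. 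For the edge case $\ell=1$, I would re-run the inequality chain $\lVert(A+BF)x+B\varsigma_0(x)+\varrho_0(\bar w_0^0)\rVert_P\le \sqrt{\rho V_P(x)+c_1\lVert x\rVert_\xi^2}+\sqrt{\lambda_{\max}(P)}\,\varepsilon_N(W_m)$ that is used inside Lemma~\ref{lemma6}'s proof, obtaining an inequality of the same form whose coefficient is strictly smaller than $\alpha$ (because $(1-\varphi_1)^{-1}\ge 1$). Thus, regardless of the dropout pattern $\{i_j\}$, I arrive at the scalar recursion $\sqrt{V_P(x(k_{j+1}))}\le \alpha\sqrt{V_P(x(k_j))}+\beta$ for every $j\in\mathbb{N}_0$.

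Second, I would verify that the design condition (\ref{eq42n8}) is exactly $\alpha<1$: rearranging it gives $\sqrt{c_1\lambda_{\max}(\xi)}/[(1-\varphi_1)\sqrt{\lambda_{\min}(P)}]<1-\varphi_1$, i.e.\ $\alpha<1$. Together with $\Pi=\Pi^{\star}+\xi\ge\Pi^{\star}$, Lemma~\ref{lemma4} guarantees that $\upsilon(x)$ is nonempty, closed and convex for every $x$, so the minimizer in (\ref{eq7}) exists. Iterating the contraction produces $\sqrt{V_P(x(k_j))}\le \alpha^{j}\sqrt{V_P(x(0))}+(1-\alpha^{j})(1-\alpha)^{-1}\beta$, which is bounded for all $j$ and converges to $\beta/(1-\alpha)$; applying the per-segment bound once more with $x=x(k_j)$ to every $\ell\in\{1,\dots,i_j\}$ transfers boundedness and the same asymptotic bound to every intermediate sample, giving $\limsup_{k\to\infty}\sqrt{V_P(x(k))}\le \beta/(1-\alpha)$.

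Finally, I would pass to the $\ell^2$ norm using $\sqrt{\lambda_{\min}(P)}\,{\lVert x\rVert}_2\le\sqrt{V_P(x)}$ and simplify $(1-\alpha)\sqrt{\lambda_{\min}(P)}=[(1-\varphi_1)^2\sqrt{\lambda_{\min}(P)}-\sqrt{c_1\lambda_{\max}(\xi)}]/(1-\varphi_1)$; the factor $(1-\varphi_1)$ in $\beta$ then cancels, and the right-hand side collapses exactly to $\Psi_1$ in (\ref{eq42n8n1}). The main obstacle I anticipate is the bookkeeping around the arbitrary dropout pattern $\{i_j\}$: I must argue that Lemma~\ref{lemma6}'s bound is truly uniform across $\ell\in\{1,\dots,N\}$ (its stated range $i=1,\dots,N-1$ only covers $\ell\ge 2$, so the $\ell=1$ case must be handled separately), and I must check that the threshold in (\ref{eq42n8}) matches the contraction requirement $\alpha<1$ with no slack lost in the arithmetic.
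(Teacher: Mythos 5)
Your proposal is correct and follows essentially the same route as the paper's proof: the paper likewise defines $\varphi_2=\varphi_1+\sqrt{c_1\lambda_{\max}(\xi)}/[(1-\varphi_1)\sqrt{\lambda_{\min}(P)}]$ and $\Theta_1=(1-\varphi_1)^{-1}\sqrt{\lambda_{\max}(P)}$, uses Lemma~\ref{lemma6} to get the contraction $\sqrt{V_P(x(t_{n+1}))}\le\varphi_2\sqrt{V_P(x(t_n))}+\Theta_1\varepsilon_N(W_m)$ at the packet-receipt times, iterates by induction, transfers the bound to intermediate instants, and converts to the $\ell^2$ norm via $\lambda_{\min}(P)$ to recover $\Psi_1$. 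Your explicit treatment of the $\ell=1$ edge case (which Lemma~\ref{lemma6}'s stated index range technically omits) is a small refinement the paper glosses over, but it does not change the argument.
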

\begin{proof}
	First, let us define $\varphi_2$ and $\Theta_1$ as follows: 
	\begin{equation}\label{eq42n9}
	\begin{split}
	{\varphi_2}&\triangleq{{\varphi_1}}+\frac{{\sqrt{{c_1}\lambda_{\max}(\xi)}}}{(1-\varphi_1)\sqrt{\lambda_{\min}(P)}}\\
	{\Theta_1}&\triangleq{(1-\varphi_1)^{-1}{\sqrt{\lambda_{\max}(P)}}}.
	\end{split}
	\end{equation}
Moreover, we define $\mathcal{T}$ as the set of all time instants when the control packets are not lost within transmission. The set $\mathcal{T}$ is characterized as  
	\begin{equation}\label{eq17}
	\mathcal{T}\triangleq{\{t_n\}_{n\in{{\mathbb{N}}_0}}}\subseteq{{{\mathbb{N}}_0}},
	\end{equation}
	where ${t_{n+1}}>{t_n}, \forall{n}\in{{{{\mathbb{N}}_0}}}$. We denote the number of packet dropouts between $t_n$ and $t_{n+1}$ by $q_n$. It can be implied from the definition of $q_n$ that  
	\begin{equation}\label{eq18}
	{q_n}={t_{n+1}}-{t_n}-1,\qquad{\forall{n}\in{{{\mathbb{N}}_0}}}.
	\end{equation}
	 Assume that the current time step is $t_n$, ${n\in{{{\mathbb{N}}_0}}}$, and the control packet ${U}(x(t_n))$ arrives at the actuator. Between $t_n$ and  ${t_{n+1}}$, the actuator applies ${u_0}(x({t_n})),\dots,{u_{{q_n}}(x({t_n}))}$, computed according to the $\ell^2$-constrained ${\ell}^0$ sparse PPC policy (\ref{eq7}), to the plant successively. Then it follows from the plant dynamics (\ref{eq1}), the recursion used for the state predictions, (\ref{eq42n9}) and Lemma~\ref{lemma6} that $\sqrt{V_P(x(k))}$ is bounded as
	\begin{equation}\label{eq42n11}
	\sqrt{V_P(x(k))}\leq{{\varphi_2}\sqrt{{V_P(x({t_n}))}}+{\Theta_1}}{\varepsilon_N({{W}_m})}
	\end{equation}
	for every $k\in\{{t_n}+1,{t_n}+2,\dots,{t_n}+{q_n}+1\}$.  Note that (\ref{eq42}) holds for every realization $\bar{w}_0^{i}\in{\mathbb{W}^{i+1}}$, $\forall{i}\in\{0,1,\dots,N-1\}$. We can conclude from from (\ref{eq18}) and (\ref{eq42n11}) that 
	\begin{equation}\label{eq42n12}
	\sqrt{V_P(x(t_{n+1}))}\leq{{\varphi_2}\sqrt{{V_P(x({t_n}))}}+\Theta_1{\varepsilon_N({{W}_m})}}.
	\end{equation}
According to Assumption~\ref{ass1}, ${U}(x(0))$ is received by the actuator at time $0$. Given this and by mathematical induction, we can deduce from (\ref{eq42n12})	that the following holds:	
	\begin{equation}\label{eq42n13}
	\begin{split}
	\sqrt{V_P(x({t_{n}}))}\leq&{{{{\varphi}}_{2}^{n}}\sqrt{V_P(x(0))}}\\
	&{+(1+{\varphi_2}+\dots+{{\varphi}_{2}^{n-1}}){\Theta_1}{\varepsilon_N({{W}_m})}}.
	\end{split}
	\end{equation}
	It stems from (\ref{eq42n8}) ($0\leq\varphi_{2}<1$) and properties of weighted norms that
	\begin{equation}\label{eq42n14}
	\begin{split}
		\sqrt{V_P(x({t_{n}}))}\leq&{{{\varphi}_{2}^{n}}{\sqrt{\lambda_{\max}(P){\lVert{x(0)}\rVert}_{2}}}}\\
		&{+{(1-{{\varphi_2}})^{-1}}{\Theta_1}{\varepsilon_N({{W}_m})}}.
	\end{split}
	\end{equation}
	Now, we can use the inequality (\ref{eq42n11}) to derive
	\begin{equation}\label{eq42n15}
	\begin{split}
	\sqrt{V_P(x({k}))}\leq&{{{\varphi}_{2}^{n+1}}{\sqrt{\lambda_{\max}(P){\lVert{x(0)}\rVert}_{2}}}}\\
	&{+{(1-{{\varphi_2}})^{-1}}{\Theta_1}{\varepsilon_N({{W}_m})}}.
	\end{split}
	\end{equation}
	for any $k\in\{{t_n}+1,\dots,{t_{n+1}}\}$. Based on the properties of weighted norms, the following is extracted from (\ref{eq42n15}): 
	\begin{equation}\label{eq42n16}
	{{\lVert{x(k)}\rVert}_{2}}\leq{{{\varphi}_{2}^{{n}+1}}}\sqrt{\frac{{\lambda_{\max}(P){\lVert{x(0)}\rVert}_{2}}}{{{{\lambda}_{\min}}(P)}}}+\Psi_1.
	\end{equation}
	where $\Psi_1$ is specified by (\ref{eq42n8n1}). The inequality (\ref{eq42n16}) proves the bounded-ness of the plant state at each time instant $k\in\mathbb{N}_0$. Finally, using the fact $k\to\infty$ is equivalent to $n\to\infty$ and $0\leq\varphi_{2}<1$, we obtain (\ref{eq42n8n}) which completes the proof.     	     
\end{proof} 
\begin{remark}
Theorem~\ref{th2} characterizes the effect of the disturbance signal on the system performance. According to (\ref{eq41n5nn}) and (\ref{eq42n8n1}), $\Psi_1$ is an increasing function with respect to ${{W}_m}$. So, once one allows for larger disturbances by increasing ${{W}_m}$ (while keeping other parameters intact), the bound on the steady-state $\ell^2$ norm of the state grows larger. Since ${{W}_m}$ is a finite value, such increase does not affect system stability but it may lead to performance degradation.               
\end{remark}
\noindent
Although we will use an effective method for solving the optimization problem (\ref{eq7}) associated with $\ell^2$-constrained $\ell^0$ sparse PPC, this problem is still non-convex and computationally cumbersome. One remedy for that is mainly relaxing the $\ell^0$ norm of the control input by $\ell^1$ norm and setting quadratic penalties on the states. We formalize this as unconstrained ${\ell}^1$-${\ell}^2$ sparse PPC and study the stability of the NCS of Fig.~\ref{ppcfig1} under such control method in the following section.
%%%%%%%%%%%%%%%%%%%%%%%%%%%%%%%%%%%%%%%%%%%%%%%%%%%%%%%%%%%%%%%%%%%%%%%%
%%%%%%%%%%%%%%%%%%%%%%%%%%%%%%%%%%%%%%%%%%%%%%%%%%%%%%%%%%%%%%%%%%%%%%%%
%%%%%%%%%%%%%%%%%%%%%%%%%%%%%%%%%%%%%%%%%%%%%%%%%%%%%%%%%%%%%%%%%%%%%%%%
%%%%%%%%%%%%%%%%%%%%%%%%%%%%%%%%%%%%%%%%%%%%%%%%%%%%%%%%%%%%%%%%%%%%%%%%
%%%%%%%%%%%%%%%%%%%%%%%%%%%%%%%%%%%%%%%%%%%%%%%%%%%%%%%%%%%%%%%%%%%%%%%%
%%%%%%%%%%%%%%%%%%%%%%%%%%%%%%%%%%%%%%%%%%%%%%%%%%%%%%%%%%%%%%%%%%%%%%%%
%%%%%%%%%%%%%%%%%%%%%%%%%%%%%%%%%%%%%%%%%%%%%%%%%%%%%%%%%%%%%%%%%%%%%%%%
\section{Unconstrained ${\ell}^1$-${\ell}^2$ Sparse PPC}
As already mentioned, the cost function based on which PPC works has the structure of (\ref{eq2}). For the specific case of unconstrained ${\ell}^1$-${\ell}^2$ sparse PPC, $T({\tilde{x}_{N}})={{\lVert{{\tilde{x}_{N}}}\rVert}_{P}^{2}}$ and $S({\tilde{x}_{i}},{u_i})=$ ${{\lVert{{\tilde{x}_{i}}}\rVert}_{Q}^{2}}+\nu{\lvert{u_{i}}\rvert}$  . We consider $\nu>0$, and $Q$ and $P$ as positive definite matrices. Remember that ${\tilde{x}_{i+1}}=A{\tilde{x}_{i}}+B{u_{i}}$, $i=0,...,N-1$, describes the recursion used for the future states prediction. Considering this and after some manipulations, we can reexpress the cost function $J$ as follows:
\begin{equation}\label{eq3}
J({x},{U})={{\lVert{M{U}-K{x}}\rVert}_{2}^{2}}+{{\lVert{{{x}}}\rVert}_{Q}^{2}}+\nu{{\lVert{{U}}\rVert}_{1}},
\end{equation} 
where matrices $M$ and $K$ are defined based on (\ref{eq4}) as in the previous section. Furthermore, $x={\tilde{x}_{0}}$ and ${U}=[{u_{0}}\dots {{u}_{(N-1)}}]^\top$. It follows from (\ref{eq3}) that every control packet $U({{x}}(k))$, $k\in{\mathbb{N}_0}$, is generated based on 
\begin{align}\label{eq5}
{U}(x)&=\arg\min_{U\in{\mathbb{R}}^{N}}{{\lVert{M{U}-Kx}\rVert}_{2}^{2}}+{{\lVert{{{x}}}\rVert}_{Q}^{2}}+\nu{{\lVert{{U}}\rVert}_{1}}.
\end{align}   
The solution to an optimization problem such as the one associated with (\ref{eq5}) is sparse\cite{hayashi2013user}. There are several approaches for solving such a problem. Under unconstrained ${\ell}^1$-${\ell}^2$ associated with (\ref{eq5}), practical stability can be obtained under specific conditions. We derive such conditions in terms of $Q>0$, $P>0$ and $\nu>0$ in what follows.    
\begin{remark}\label{rem2}
Based on the same line of arguments as in Remark~\ref{rem1n}, the  unconstrained ${\ell}^1$-${\ell}^2$ sparsity-promoting optimization considered here possesses the same formulation as its counterpart in \cite[Section~III-A]{nagahara2014sparse} in the unconstrained ${\ell}^1$-${\ell}^2$ sparse PPC of disturbance-free plants.  
\end{remark}
\noindent
For proving the stability analysis results, we utilize the value function 
\begin{equation}\label{eq8}
V(x)\triangleq{\min_{{U}\in{\mathbb{R}}^{N}}{J(x,U)}},
\end{equation}
where $J(x,U)$ represents the cost function in (\ref{eq3}). We start by  deriving bounds on $V(x)$ in the following lemma.  
\begin{lemma}\label{lemma1}
Consider ${\Pi}^{\star}$ and ${M^\dagger}$ as specified in (\ref{eq10}). Define the function $\tau(.)$ as  $\tau(y)\triangleq{{\alpha}y+({\beta}+{{\lambda}_{\max}}(Q)){y^2}}$ with ${\alpha}=\nu\sqrt{n}{{\sigma}_{\max}}({M}^{\dagger}K)$ and ${\beta}={{{\lambda}_{\max}}({\Pi}^{\star}})$. Then for every $x\in{{\mathbb{R}}^{n}}$, lower and upper bounds are derived on $V(x)$ as follows:
\begin{equation}\label{eq9}
{{{\lambda}_{\min}}(Q)}{{\lVert{x}\rVert}_{2}^{2}}\leq{V(x)}\leq{\tau({\lVert{x}\rVert}_{2}}).
\end{equation}
\end{lemma}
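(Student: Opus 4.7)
The plan is to split the two inequalities and handle them independently: the left one follows from nonnegativity of two of the three summands in $J$, while the right one is obtained by evaluating $J$ at a cleverly chosen suboptimal $U$ and bounding each summand separately.

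For the lower bound, observe that in the cost (\ref{eq3}) we always have $\|MU-Kx\|_2^2\ge 0$ and $\nu\|U\|_1\ge 0$, so $J(x,U)\ge \|x\|_Q^2\ge \lambda_{\min}(Q)\|x\|_2^2$ for every $U\in\mathbb{R}^N$. Since the right-hand side does not depend on $U$, taking the minimum over $U$ in (\ref{eq8}) preserves the inequality and yields the left estimate of (\ref{eq9}).

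For the upper bound I would use the least-squares candidate $\tilde U \triangleq M^{\dagger}Kx$, which is the natural minimizer of the quadratic tracking part and is well-defined because $M$ has full column rank (guaranteed by the reachability of $(A,B)$ together with $P,Q>0$, so that $M^{\dagger}=(M^\top M)^{-1}M^\top$ in (\ref{eq10}) makes sense). Plugging $\tilde U$ into the residual gives $M\tilde U-Kx=-(I-MM^{\dagger})Kx$, and since $MM^{\dagger}=M(M^\top M)^{-1}M^\top$ is symmetric and idempotent, $I-MM^{\dagger}$ is an orthogonal projector. Hence
\begin{equation*}
\|M\tilde U-Kx\|_2^2 = x^\top K^\top(I-MM^{\dagger})Kx = \|x\|_{\Pi^{\star}}^2 \le \lambda_{\max}(\Pi^{\star})\|x\|_2^2 = \beta\|x\|_2^2.
\end{equation*}
For the state penalty, $\|x\|_Q^2\le\lambda_{\max}(Q)\|x\|_2^2$. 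For the sparsity penalty, the standard inequality $\|v\|_1\le\sqrt{n}\|v\|_2$ combined with the definition of $\sigma_{\max}$ yields $\|\tilde U\|_1\le \sqrt{n}\,\sigma_{\max}(M^{\dagger}K)\|x\|_2$, so $\nu\|\tilde U\|_1\le\alpha\|x\|_2$. Summing the three bounds gives $V(x)\le J(x,\tilde U)\le \alpha\|x\|_2+(\beta+\lambda_{\max}(Q))\|x\|_2^2=\tau(\|x\|_2)$.

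There is no real obstacle; the only substantive choice is picking the pseudoinverse-based feasible $U$ for the upper bound, after which everything reduces to three standard eigenvalue/norm estimates. The one point worth being careful about is verifying that $I-MM^{\dagger}$ is a genuine orthogonal projector, because that is what converts the squared residual into the weighted norm $\|x\|_{\Pi^{\star}}^2$ cleanly; beyond that, the proof is essentially bookkeeping with the norm inequalities already collected in the Notation section.
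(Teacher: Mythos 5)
Your proof is correct and self-contained, whereas the paper's own proof of this lemma consists of a single sentence importing \cite[Lemma~5]{nagahara2014sparse} via Remark~2; the argument you give (lower bound by discarding the two nonnegative summands of $J$ in (\ref{eq3}), upper bound by evaluating $J$ at the least-squares candidate $\tilde U=M^{\dagger}Kx$ and bounding the three terms separately) is precisely the argument underlying that cited lemma, so the route is the same in substance --- you have simply made explicit what the paper delegates to the reference. Two caveats, both inherited from the statement rather than introduced by you. First, your identity $\lVert M\tilde U-Kx\rVert_2^2=x^{\top}K^{\top}(I-MM^{\dagger})Kx=\lVert x\rVert_{\Pi^{\star}}^2$ silently corrects what is evidently a typo in (\ref{eq10}): as printed, $\Pi^{\star}=K^{\top}(I-MM^{\dagger})M$, which vanishes identically since $(I-MM^{\dagger})M=M-M(M^{\top}M)^{-1}M^{\top}M=0$; the intended definition is $K^{\top}(I-MM^{\dagger})K$, which is exactly what your projector computation produces (and note $\Pi^{\star}$ is then only positive semidefinite, which is all the bound $x^{\top}\Pi^{\star}x\le\lambda_{\max}(\Pi^{\star})\lVert x\rVert_2^2$ requires). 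Second, since $\tilde U\in\mathbb{R}^{N}$, the norm-equivalence step gives $\lVert\tilde U\rVert_1\le\sqrt{N}\lVert\tilde U\rVert_2$, not $\sqrt{n}\lVert\tilde U\rVert_2$; the $\sqrt{n}$ appearing in the lemma's definition of $\alpha$ (and copied into your write-up) should be $\sqrt{N}$ --- with $N>n$, as in the paper's simulation, the $\sqrt{n}$ version of the bound need not hold. Neither point reflects a gap in your reasoning, but both are worth flagging if this proof is to stand on its own.
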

\begin{proof}
Considering Remark~\ref{rem2}, we can conclude the claim immediately from \cite[Lemma~5]{nagahara2014sparse}.
\end{proof}
\begin{lemma}\label{lemma2}
Let $P>{0}$ solve the Riccati equation (\ref{eq12}) with $r={{\mu}^{2}}N/{\zeta}$, $\zeta>{0}$. Define $\chi\triangleq{{\lambda_{\max}}(Q)+{\lambda_{\max}}({K^\top}K)}$. 
Then the following holds for every $x\in{\mathbb{R}}^{n}$ and every realization of $\bar{w}_0^{i-1}\in{\mathbb{W}^i}$:
\begin{equation}\label{eq13}
\begin{split}
\sqrt{V({f^i}(x,\bar{w}_0^{i-1}))}\leq&{\sqrt{V(x)-{{{\lambda_{\min}}(Q){\lVert{x}\rVert}_{2}^{2}}}}}\\
	&{+\sqrt{\chi}{\gamma_N({W_m})}+\sqrt{\zeta}},
\end{split}
\end{equation}
where ${i=1,2,\dots,N}$. 
\end{lemma}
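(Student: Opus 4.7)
The plan is to split the disturbed one-step map via (\ref{eq11n2}) as $f^i(x,\bar w_0^{i-1}) = \bar f^i(x) + g^i(\bar w_0^{i-1})$ and handle the nominal and disturbance contributions separately. For the nominal part, I would invoke the disturbance-free counterpart of this statement, which by Remark~\ref{rem2} is exactly \cite[Lemma~7]{nagahara2014sparse} applied to $\bar f^i$; with the Riccati choice $r=\mu^2N/\zeta$ it yields
\begin{equation*}
\sqrt{V(\bar f^i(x))} \le \sqrt{V(x) - \lambda_{\min}(Q)\|x\|_2^2} + \sqrt{\zeta}.
\end{equation*}

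To bring in the disturbance, I would use a suboptimality argument. Since $U^*(\bar f^i(x))$ (the minimizer of $J(\bar f^i(x),\cdot)$) is feasible at $y := f^i(x,\bar w_0^{i-1})$,
\begin{equation*}
V(y) \;\le\; \|MU^*(\bar f^i) - Ky\|_2^2 + \|y\|_Q^2 + \nu\|U^*(\bar f^i)\|_1.
\end{equation*}
Substituting $y = \bar f^i + g^i$ and applying the ordinary triangle inequality to the $\ell^2$ and $Q$-weighted norms gives $\|MU^*(\bar f^i) - Ky\|_2 \le \|MU^*(\bar f^i) - K\bar f^i\|_2 + \|Kg^i\|_2$ and $\|y\|_Q \le \|\bar f^i\|_Q + \|g^i\|_Q$.

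The key algebraic step I anticipate as the main obstacle is combining these three contributions under a single square root. I plan to use the elementary inequality
\begin{equation*}
\sqrt{(a+a')^2 + (b+b')^2 + c} \;\le\; \sqrt{a^2+b^2+c} + \sqrt{(a')^2+(b')^2},
\end{equation*}
valid for $a,b,a',b',c\ge 0$; it is proved by squaring both sides and dominating the cross term $2(aa'+bb')$ by $2\sqrt{a^2+b^2}\sqrt{(a')^2+(b')^2} \le 2\sqrt{a^2+b^2+c}\sqrt{(a')^2+(b')^2}$ via Cauchy--Schwarz. With the assignments $a=\|MU^*(\bar f^i)-K\bar f^i\|_2$, $a'=\|Kg^i\|_2$, $b=\|\bar f^i\|_Q$, $b'=\|g^i\|_Q$, $c=\nu\|U^*(\bar f^i)\|_1$, the first term on the right-hand side equals $\sqrt{V(\bar f^i(x))}$.

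Finally, I would dominate the disturbance residual by eigenvalue inequalities,
\begin{equation*}
\|Kg^i\|_2^2 + \|g^i\|_Q^2 \;\le\; \bigl(\lambda_{\max}(K^\top K)+\lambda_{\max}(Q)\bigr)\|g^i\|_2^2 \;=\; \chi\|g^i\|_2^2,
\end{equation*}
and apply Lemma~\ref{lemma0} to conclude $\sqrt{\|Kg^i\|_2^2+\|g^i\|_Q^2} \le \sqrt{\chi}\,\gamma_N(W_m)$ uniformly in $i\in\{1,\dots,N\}$ and in the realization of $\bar w_0^{i-1}$. Chaining this with the nominal bound above delivers (\ref{eq13}). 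The only subtle point is verifying that the triangle-type inequality under the square root is tight enough to preserve the additive structure of the three terms on the right-hand side; once that elementary inequality is in hand, the rest is a sequence of standard norm estimates and an application of the previously established results.
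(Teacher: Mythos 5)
Your proposal is correct, and its nominal half coincides with the paper's: both invoke \cite[Lemma~7]{nagahara2014sparse} via Remark~\ref{rem2} to obtain $\sqrt{V(\bar f^i(x))}\le\sqrt{V(x)-\lambda_{\min}(Q)\lVert x\rVert_2^2}+\sqrt{\zeta}$, i.e. (\ref{eq13n3}). Where you genuinely diverge is in how the disturbance is absorbed. The paper sandwiches $V$ between $\lambda_{\min}(Q)\lVert\cdot\rVert_2^2$ and $\chi\lVert\cdot\rVert_2^2$ (the latter from $V(x)\le J(x,0)$), writes $\sqrt{V(f^i)}-\sqrt{V(\bar f^i)}\le\sqrt{\chi}\lVert f^i\rVert_2-\sqrt{\lambda_{\min}(Q)}\lVert\bar f^i\rVert_2$, and then passes to $\sqrt{\chi}\lVert f^i-\bar f^i\rVert_2$ by an appeal to the reverse triangle inequality; as written that last step does not follow once $\chi>\lambda_{\min}(Q)$ (take $f^i=\bar f^i\ne 0$: the quantity being bounded is $(\sqrt{\chi}-\sqrt{\lambda_{\min}(Q)})\lVert\bar f^i\rVert_2>0$ while the proposed bound is $0$). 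You instead reach the same target inequality (\ref{eq13n3n5}), namely $\sqrt{V(f^i)}\le\sqrt{V(\bar f^i)}+\sqrt{\chi}\,\gamma_N(W_m)$, by a suboptimality argument: evaluate $J(f^i,\cdot)$ at the minimizer for $\bar f^i$, peel off the disturbance with the ordinary triangle inequality, and recombine via the Cauchy--Schwarz-based inequality $\sqrt{(a+a')^2+(b+b')^2+c}\le\sqrt{a^2+b^2+c}+\sqrt{(a')^2+(b')^2}$, which I checked and is valid. This yields a genuine one-sided Lipschitz estimate on $\sqrt{V}$ that is uniform in the realization of $\bar w_0^{i-1}$, so your route not only proves the lemma but repairs the one shaky step in the paper's own derivation; the final chaining with Lemma~\ref{lemma0} and (\ref{eq13n3}) is identical in both arguments.
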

\begin{proof}
It is implied from Remark~\ref{rem2} that the formulation of $J(x,U)$  in (\ref{eq3}) is the same across our paper and \cite[Section ~III-A]{nagahara2014sparse}.  Moreover, ${\bar{f}^i}(x)$ is equivalent to ${{f}^i}(x)$ in \cite[Section~IV-A]{nagahara2014sparse}. Having the latter statements in mind and considering $\sqrt{a+b}\leq{\sqrt{a}+\sqrt{b}}$, $\forall{a,b}\geq0$, we can deduce from \cite[Lemma~7]{nagahara2014sparse} that
\begin{equation}\label{eq13n3}
\sqrt{V({\bar{f}^i}(x))}\leq{{\sqrt{V(x)-{{\lambda}_{\min}}(Q){{\lVert{x}\rVert}_{2}^{2}}}+\sqrt{\zeta}}}
\end{equation}
holds for any $x\in{\mathbb{R}}^{n}$, ${i=1,2,\dots,N}$. It stems from the definition of $V(x)$ that $V(x)\leq{J(x,0)}$, $\forall{x\in\mathbb{R}^n}$. Then according to (\ref{eq3}) and properties of weighted norms, ${\sqrt{V(x)}\leq\sqrt{{\chi}}{{\lVert{x}\rVert}_{2}}}$ is deduced. Using (\ref{eq9}), we can derive
\begin{equation}\label{eq13n3n3}
\begin{split}
\sqrt{V({f^i}(x,\bar{w}_0^{i-1}))}-\sqrt{V({\bar{f}^i}(x))}&\leq{\sqrt{{\chi}}{{\lVert{{f^i}(x,\bar{w}_0^{i-1})}\rVert}_{2}}}\\
&{-{\sqrt{{\lambda_{\min}(Q)}}{{\lVert{{\bar{f}^i}(x)}\rVert}_{2}}}}
\end{split}	
\end{equation}
Since $\chi\geq{\lambda_{\min}(Q)}$ and according to the reverse triangle inequality, we reformulate the right-hand-side of (\ref{eq13n3n3}) as  
\begin{equation*}\label{eq13n3n4}
\sqrt{V({f^i}(x,\bar{w}_0^{i-1}))}-\sqrt{V({\bar{f}^i}(x))}\leq{\sqrt{{\chi}}{{\lVert{{f^i}(x,\bar{w}_0^{i-1})-{{\bar{f}^i}(x)}}\rVert}_{2}}}.
\end{equation*}
Now, based on Lemma~\ref{lemma0} and (\ref{eq11n2}), we can conclude that 
\begin{equation}\label{eq13n3n5}
\sqrt{V({f^i}(x,\bar{w}_0^{i-1}))}-\sqrt{V({\bar{f}^i}(x))}\leq{\sqrt{{\chi}}{\gamma_N({W_m})}}
\end{equation}
holds for every realization of $\bar{w}_0^{i-1}\in{\mathbb{W}^i}$ and every $x\in{\mathbb{R}}^{n}$. The claim follows immediately by adding up the corresponding sides of (\ref{eq13n3}) and (\ref{eq13n3n5}).	
\end{proof}
\noindent
\begin{lemma}\label{lemma3}
	Suppose that $P>0$ is a solution to the Riccati equation (\ref{eq12}) with $r={{\mu}^{2}}N/{\zeta}$ where $\zeta>{0}$ is an arbitrary positive real number. Then real constant $\varphi{\in(0,1)}$ exists in such a way that 
	\begin{equation}\label{eq14}
	\begin{split}
	\sqrt{V({f^i}(x,\bar{w}_0^{i-1}))}\leq&{{\sqrt{\varphi{V(x)}}+{\frac{\sqrt{\lambda_{\min}(Q)}}{2}}}}\\
	&+{\sqrt{\chi}{\gamma_N({W_m})}+\sqrt{\zeta}}	
	\end{split}
	\end{equation}
	holds for every $x\in{{\mathbb{R}}^n}$, every realization of $\bar{w}_0^{i-1}\in{\mathbb{W}^i}$ and every $i$ belonging to the set $\{1,2,\dots,N\}$.   
\end{lemma}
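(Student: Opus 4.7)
My plan is to reduce the claim to its disturbance-free analogue \cite[Lemma~8]{nagahara2014sparse} and then splice back the disturbance contribution that has already been quantified in the course of proving Lemma~\ref{lemma2}. More precisely, the steps in the proof of Lemma~\ref{lemma2} leading to (\ref{eq13n3n5}) are purely geometric: they depend only on the a priori estimate $\sqrt{V(x)}\leq\sqrt{\chi}\,{\lVert x\rVert}_2$ obtained from $V(x)\leq J(x,0)$, together with Lemma~\ref{lemma0} and the reverse triangle inequality, and do not exploit any particular form of the argument of $V$. Replaying them verbatim with $\bar{f}^i(x)$ and $f^i(x,\bar{w}_0^{i-1})$ in place of $\bar{f}^i(x)$ and $f^i(x,\bar{w}_0^{i-1})$ respectively yields
\begin{equation*}
\sqrt{V({f^i}(x,\bar{w}_0^{i-1}))}\leq\sqrt{V({\bar{f}^i}(x))}+\sqrt{\chi}\,{\gamma_N}(W_m)
\end{equation*}
for every $x\in{\mathbb{R}}^n$, every $i\in\{1,\dots,N\}$ and every realization $\bar{w}_0^{i-1}\in{\mathbb{W}^i}$.

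With this reduction in hand, what remains is to show that
\begin{equation*}
\sqrt{V({\bar{f}^i}(x))}\leq\sqrt{\varphi\,V(x)}+\frac{\sqrt{\lambda_{\min}(Q)}}{2}+\sqrt{\zeta}
\end{equation*}
for some $\varphi\in(0,1)$. I would pull this straight out of \cite[Lemma~8]{nagahara2014sparse}: by Remark~\ref{rem2}, the value function $V(\cdot)$ in (\ref{eq8}) and the nominal iterate ${\bar{f}^i}(\cdot)$ in (\ref{eq11n1}) coincide with the objects analyzed in \cite[Section~IV-A]{nagahara2014sparse}, and the Riccati hypothesis (\ref{eq12}) with $r=\mu^2 N/\zeta$ matches that lemma's standing assumption. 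Adding the two displayed inequalities then produces (\ref{eq14}) with the very same $\varphi$ supplied by \cite[Lemma~8]{nagahara2014sparse}.

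The main obstacle is therefore the algebraic content buried inside \cite[Lemma~8]{nagahara2014sparse}, namely that one can pick a single $\varphi\in(0,1)$ for which
\begin{equation*}
\sqrt{V(x)-\lambda_{\min}(Q)\,{\lVert x\rVert}_2^2}\leq\sqrt{\varphi\,V(x)}+\frac{\sqrt{\lambda_{\min}(Q)}}{2}
\end{equation*}
holds globally in $x$. If one had to prove this from scratch, the natural route would be a case split on the magnitude of $\lVert x\rVert_2$: when $\lVert x\rVert_2$ is large, the sandwich $\lambda_{\min}(Q)\lVert x\rVert_2^2\leq V(x)\leq\tau(\lVert x\rVert_2)$ from Lemma~\ref{lemma1} pushes the ratio $(V(x)-\lambda_{\min}(Q)\lVert x\rVert_2^2)/V(x)$ strictly below $1$ and pins down an admissible $\varphi$; when $\lVert x\rVert_2$ is small, the same upper bound $\tau(\lVert x\rVert_2)$ keeps $V(x)$ uniformly bounded so that the additive slack $\sqrt{\lambda_{\min}(Q)}/2$ can absorb the residual. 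Choosing a common $\varphi$ and a common case boundary that make both regimes consistent is the delicate bookkeeping, and is the only part of the argument that is not routine.
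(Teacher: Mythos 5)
Your proposal is correct and follows essentially the same route as the paper: the paper's proof simply defers to \cite[Lemma~8]{nagahara2014sparse} via Remark~\ref{rem2}, with the disturbance contribution $\sqrt{\chi}\,\gamma_N(W_m)$ already isolated in the proof of Lemma~\ref{lemma2} (your inequality is exactly (\ref{eq13n3n5}), so no replaying is needed). Your concluding sketch of the case split inside \cite[Lemma~8]{nagahara2014sparse} is slightly loose in the small-$\lVert x\rVert_2$ regime (the slack $\sqrt{\lambda_{\min}(Q)}/2$ arises from $\lambda_{\min}(Q)(\lVert x\rVert_2-\lVert x\rVert_2^2)\leq\lambda_{\min}(Q)/4$ rather than from a uniform bound on $V(x)$), but since you, like the paper, cite that lemma rather than reprove it, this does not affect the argument.
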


\begin{proof}
	Following the same steps as in the proof of \cite[Lemma~8]{nagahara2014sparse} leads us towards deriving the claim. This is due to the fact that we use the same formulation for the cost function and recursion for state prediction as the ones utilized in \cite[Section~III-A]{nagahara2014sparse} for the unconstrained ${\ell}^1$-${\ell}^2$ sparse PPC of disturbance-free plants.  
\end{proof}   
\noindent
Now, we are ready to establish the sufficient conditions for practical stability of the NCS of Fig.~\ref{ppcfig1} under the unconstrained ${\ell}^1$-${\ell}^2$ sparse PPC. The following theorem presents the latter stability conditions.
\begin{theorem}\label{th1}
Suppose $\zeta>{0}$ exists in such a way that $P>0$ satisfies (\ref{eq12}) with $r={{\mu}^{2}}N/{\zeta}$. Then for the NCS of Fig.~\ref{ppcfig1} controlled according to the unconstrained ${\ell}^1$-${\ell}^2$ sparse PPC (\ref{eq5}), the ${\ell}^2$ norm of $x(k)$ is bounded at each time instant $k\in\mathbb{N}_0$ and 
\begin{equation}\label{eq15}
\lim\limits_{k\to\infty}{{\lVert{x(k)}\rVert}_{2}}\leq{\Psi}
\end{equation}
holds where 
 	\begin{equation}\label{eq15s}
 	{\Psi\triangleq{{{\frac{1}{1-\sqrt\varphi}}}\Big[\frac{1}{2}}+\sqrt{\frac{\zeta}{{{{\lambda}_{\min}}(Q)}}}+{\frac{\sqrt{\chi}{\gamma_N({W_m})}}{{{\sqrt{{\lambda}_{\min}(Q)}}}}}\Big]}
 	\end{equation}
 	and 
 	\begin{equation}\label{eq16}
 \varphi\triangleq{1-{{{{\lambda}_{\min}}(Q)}}({\alpha}+{\beta}+{{{\lambda}_{\max}}(Q)})^{-1}}.
 \end{equation}	
 In (\ref{eq16}), $\alpha$ and $\beta$ are characterized as in Lemma~\ref{lemma1}.
\end{theorem}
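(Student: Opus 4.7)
The plan is to mirror the structure of the proof of Theorem~\ref{th2}, but with Lemma~\ref{lemma3} playing the role that Lemma~\ref{lemma6} played in the $\ell^0$ case. The key observation is that Lemma~\ref{lemma3} already provides a one-shot inequality of the form ``new value $\le \sqrt{\varphi}\cdot$ old value $+$ constants,'' which is the right structure for a contraction-plus-disturbance recursion; all that remains is to iterate it over successful packet arrivals and then bound intermediate time instants uniformly.

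First I would introduce the shorthand $\Theta \triangleq \tfrac{1}{2}\sqrt{\lambda_{\min}(Q)} + \sqrt{\chi}\,\gamma_N(W_m) + \sqrt{\zeta}$ so Lemma~\ref{lemma3} reads $\sqrt{V(f^i(x,\bar{w}_0^{i-1}))} \le \sqrt{\varphi}\sqrt{V(x)} + \Theta$ for all $i\in\{1,\dots,N\}$. Next, exactly as in the proof of Theorem~\ref{th2}, I would define the set $\mathcal{T} = \{t_n\}_{n\in\mathbb{N}_0}$ of time instants at which a packet is successfully received and let $q_n = t_{n+1}-t_n-1$ denote the number of intervening dropouts. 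By Assumption~\ref{ass1}, $q_n+1 \le N$, so Lemma~\ref{lemma3} with $i = q_n+1$ is applicable and yields the across-epoch recursion
\begin{equation*}
\sqrt{V(x(t_{n+1}))} \le \sqrt{\varphi}\,\sqrt{V(x(t_n))} + \Theta.
\end{equation*}
Iterating this recursion by induction on $n$, starting from $t_0 = 0$ (which is guaranteed to be a reception instant by Assumption~\ref{ass1}), and summing the resulting geometric series (using $\varphi \in (0,1)$ from Lemma~\ref{lemma3}) gives
\begin{equation*}
\sqrt{V(x(t_n))} \le (\sqrt{\varphi})^{n}\sqrt{V(x(0))} + \frac{\Theta}{1-\sqrt{\varphi}}.
\end{equation*}

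For an arbitrary $k \in \{t_n+1,\dots,t_{n+1}\}$, I would apply Lemma~\ref{lemma3} one more time (with $i = k-t_n \le N$) to bound $\sqrt{V(x(k))}$ in terms of $\sqrt{V(x(t_n))}$, so that the same closed-form bound holds (up to a uniform multiplicative factor) for every $k\in\mathbb{N}_0$, not just the reception instants. Then I would invoke the lower bound from Lemma~\ref{lemma1}, namely $\sqrt{\lambda_{\min}(Q)}\,\|x(k)\|_2 \le \sqrt{V(x(k))}$, together with the upper bound $\sqrt{V(x(0))} \le \sqrt{\tau(\|x(0)\|_2)}$, to translate the value-function bound into a bound on $\|x(k)\|_2$; this establishes boundedness at every $k$. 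Finally, since $k\to\infty$ forces $n\to\infty$ and $(\sqrt{\varphi})^n \to 0$, taking the limit and dividing $\Theta/(1-\sqrt{\varphi})$ through by $\sqrt{\lambda_{\min}(Q)}$ recovers exactly the expression for $\Psi$ in~(\ref{eq15s}), completing the proof.

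The main obstacle I anticipate is a bookkeeping one rather than a conceptual one: ensuring that the ``intermediate'' bound (for $k$ strictly between two reception times) is handled with the correct value of $i$ in Lemma~\ref{lemma3} and that the additive constant is not double-counted when composing the across-epoch and within-epoch inequalities. Matching the final constant to the precise form of $\Psi$ given in~(\ref{eq15s})---in particular checking that the factor $1/(1-\sqrt{\varphi})$ and the grouping of $\tfrac{1}{2}$, $\sqrt{\zeta/\lambda_{\min}(Q)}$, and $\sqrt{\chi}\gamma_N(W_m)/\sqrt{\lambda_{\min}(Q)}$ come out correctly---will require care but no new ideas.
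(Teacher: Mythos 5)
Your proposal is correct and follows essentially the same route as the paper's own proof: the same epoch decomposition over successful reception instants $t_n$, the same application of Lemma~\ref{lemma3} to get the contraction $\sqrt{V(x(t_{n+1}))}\leq\sqrt{\varphi}\sqrt{V(x(t_n))}+\Theta$, the same geometric-series summation, and the same use of the two bounds from Lemma~\ref{lemma1} to convert the value-function estimate into the state bound. The bookkeeping concerns you raise resolve exactly as you anticipate, so no further adjustment is needed.
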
	   
\begin{proof}
Recall the definition and properties of $t_n$ and $q_n$ from the proof of Theorem~\ref{th2}. Consider $t_n$ as the current time step. Based on the recursion used for the prediction of the future states, dynamics of the plant (\ref{eq1}) and Lemma~\ref{lemma3}, we can derive the following upper bound on the square root of the value function of $x(k)$:
 \begin{equation}\label{eq19}
 \sqrt{V(x(k))}\leq{\sqrt{\varphi{V(x({t_n}))}}+\Theta}
 \end{equation}  
 	for every $k\in\{{t_n}+1,{t_n}+2,\dots,{t_n}+{q_n}+1\}$. In (\ref{eq19}), ${\Theta}$ is defined as ${\Theta}\triangleq{{\sqrt{\lambda_{\min}(Q)}}/{2}}+\sqrt{\chi}{\gamma_N({W_m})}+\sqrt{\zeta}$. Note that (\ref{eq14}) holds for every realization $\bar{w}_0^{i-1}\in{\mathbb{W}^i}$, $\forall{i}\in\{1,2,\dots,N\}$.	Based on ${t_{n+1}}={{t_n}+{q_n}+1}$ and (\ref{eq19}), we can derive 
 	\begin{equation}\label{eq20}
 	\sqrt{V(x(t_{n+1}))}\leq{\sqrt{\varphi{V(x({t_n}))}}+\Theta}.
 	\end{equation}
 	We know from Assumption~\ref{ass1} that ${U}(x(0))$ reaches the actuator at time $k=0$. Based on this and by mathematical induction, $V(x({t_n}))$ is bounded as 
 		\begin{equation*}\label{eq26}
 	\sqrt{V(x({t_{n}}))}\leq{{{\sqrt{\varphi}}^{n}}\sqrt{V(x(0))}+(1+{\sqrt\varphi}+\dots+{{\sqrt\varphi}^{n-1}})\Theta}.
 	\end{equation*}
 	Then according to Lemma~\ref{lemma1}, we have
 	\begin{equation}\label{eq27}
 	\sqrt{V(x({t_{n}}))}\leq{{{\sqrt\varphi}^{n}}{\sqrt{\tau({\lVert{x(0)}\rVert}_{2}})}+{(1-{{\sqrt\varphi}})^{-1}}\Theta}.
 	\end{equation}
 Now, it stems from (\ref{eq19}) that	
 	\begin{equation}\label{eq35}
 \sqrt{V(x({k}))}\leq{{{\sqrt\varphi}^{n+1}}{\sqrt{\tau({\lVert{x(0)}\rVert}_{2}})}+{(1-{{\sqrt\varphi}})^{-1}}\Theta}
 \end{equation}
 holds for any $k\in\{{t_n}+1,\dots,{t_{n+1}}\}$. Considering the lower bound on $V(x)$ in Lemma~\ref{lemma1}, we establish
 \begin{equation}\label{eq37}
 {{\lVert{x(k)}\rVert}_{2}}\leq\sqrt{{{\varphi}^{{n}+1}}}\sqrt{\frac{{\tau({\lVert{x(0)}\rVert}_{2}})}{{{{\lambda}_{\min}}(Q)}}}+\Psi,
 \end{equation}
 which implies the boundedness of the plant state at each time instant $k\in\mathbb{N}_0$. Moreover, $\Psi$ is defined as in (\ref{eq15s}). By letting $k$ go to infinity, which is equivalent to $n$ going to infinity, we derive (\ref{eq15}) and the proof will be complete.  
\end{proof}	
\begin{remark}
The bound derived in (\ref{eq15}) on the steady-state $\ell^2$ norm of the plant state characterizes the impact of disturbance signal on system performance. According to (\ref{eq15s}) and (\ref{eq13n4n2n}),   $\Psi$ is an increasing function of ${W_m}$. Thus, rendering $W_m$ larger while keeping other parameters untouched will lead to an increase in $\Psi$. Since $W_m$ is a finite number, this increase does not affect stability but my degrade systems performance.
\end{remark}

%********************************************************************************************************************************************************
\section{Simulation Example}
Here, we simulate the NCS of Fig.~\ref{ppcfig1} via applying the sparse PPC policies developed in the previous sections. To do so, first we need to consider a plant model following (\ref{eq1}). This model is characterized by state matrix
\begin{equation}\label{eq46}
A=
\begin{bmatrix}
 0.3966&-0.4586&-0.0250&-0.7958\\
0.7459&0.8061&-0.0983&0.7943\\
-0.9451&-0.3111&-0.8236&0.2473\\
0.1551&-1.3821&-1.9151&0.0369
\end{bmatrix},
\end{equation}
the input vector
\begin{equation}\label{eq47}
B=
\begin{bmatrix}
1.0617\quad
-0.1986\quad
-0.3184\quad
0.5562 
\end{bmatrix}^\top
\end{equation}
and a disturbance signal with i.i.d samples taken from a uniform distribution over $[-W_m,W_m]$. 
\begin{figure}
	\centering
	\includegraphics[width=\columnwidth,height=7cm]{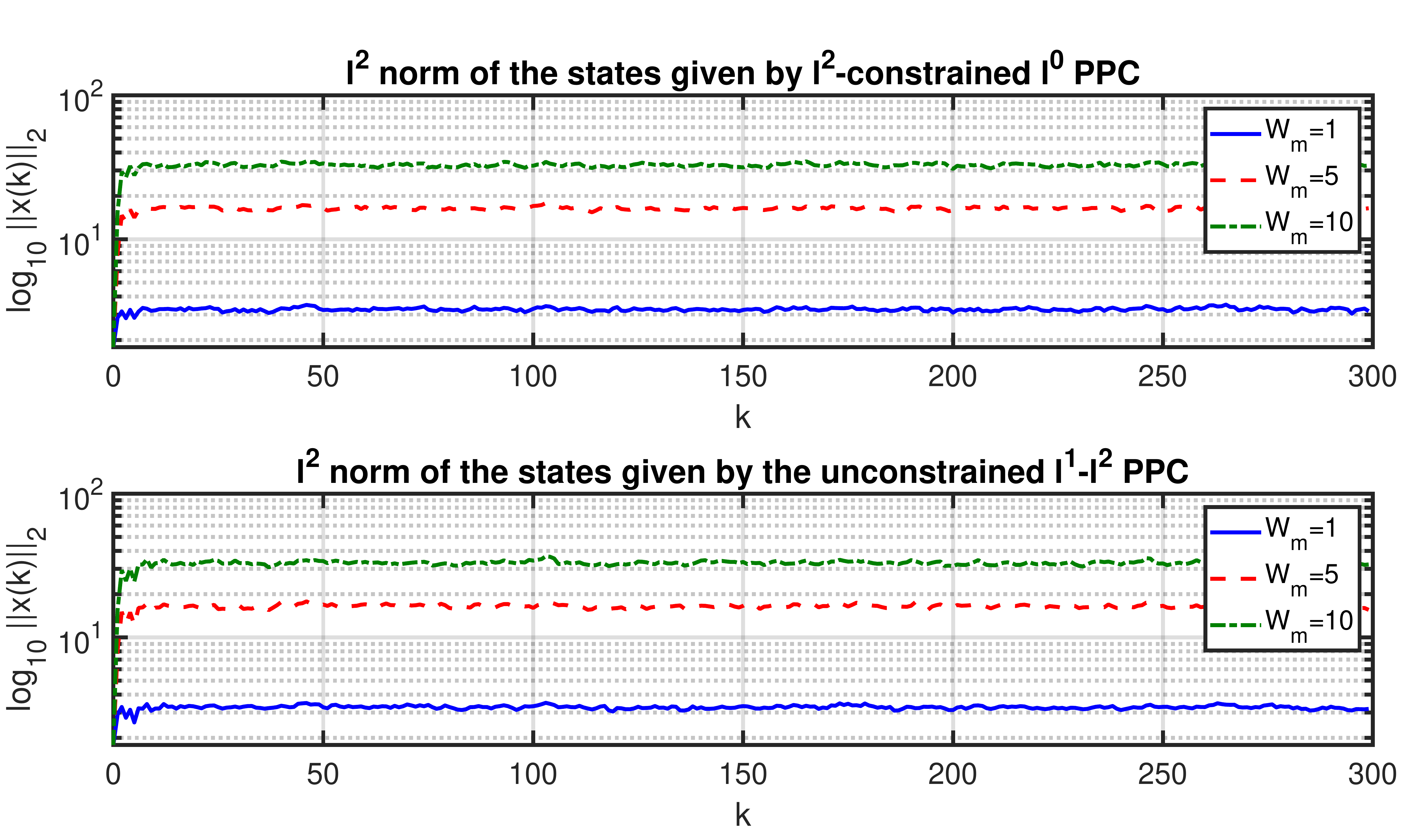}
	\caption{Average ${\ell^2}$ norm of the state $x(k)$ in the unconstrained ${\ell}^1$-${\ell}^2$ PPC (top) and ${\ell}^2$-constrained ${\ell}^0$ PPC (bottom)}
	\label{figppc2}
\end{figure}
The elements of $A$ and $B$ are samples of a normal distribution with mean $0$ and variance $1$. In (\ref{eq46}) and (\ref{eq47}), $A$ and $B$ form a pair whose reachability is straightforward to show. Note that the state matrix $A$ has $2$ unstable eigenvalues and $2$ eigenvalues inside the unit circle. The initial state vector $x(0)$ is comprised of i.i.d elements with normal distribution. We simulate the channel by setting the process $l$ in a way that the number of consecutive packet dropouts has uniform distribution with the support set $\{0,1,..,N-1\}$. For the controller simulation, we consider the prediction horizon $N$ as $N=10$ and the weighting matrix $Q$ as $Q=I_{4\times4}$ in both cases of ${\ell}^2$-constrained ${\ell}^0$ sparse PPC and unconstrained ${\ell}^1$-${\ell}^2$ sparse PPC. For simulating the ${\ell}^2$-constrained ${\ell}^0$ sparse PPC, we regulate the design parameters $\Pi$ and $\xi$ based Theorem~\ref{th2}. For this example, we set $\xi$ as $\xi=\big({(1-\varphi_1)^4{\lambda_{\min}(P)}}/{4{c_1}}\big)I_{4\times4}$. Moreover, we implement OMP as the algorithm for solving the optimization problem related to (\ref{eq7}). For simulating the ${\ell}^1$-${\ell}^2$ sparse PPC, the parameters $\nu$ and $r$  are set as  $\nu=200$ and $r=2$. In this case, fast iterative shrinkage-
thresholding algorithm (FISTA) is implemented for solving the corresponding optimization problem. To evaluate the effect of the plant disturbance on the system performance, we carry out the simulation for $3$ different values of 
$W_m$, i.e., $W_m\in\{1,5,10\}$.

The results of the simulation are illustrated by Fig.~\ref{figppc2} and Fig.~\ref{figppc3}. Such results are obtained by averaging over $200$ number of $300$-sample-long simulations. The vertical axis of Fig.~\ref{figppc2} corresponds to ${{\lVert{x(k)}\rVert}_{2}}$ and the horizontal axis to time. As observed from Fig.~\ref{figppc2}, the $\ell^2$ norm of the state varies over a bounded range along the simulation time in both cases of ${\ell}^2$-constrained ${\ell}^0$ sparse PPC and unconstrained ${\ell}^1$-${\ell}^2$ sparse PPC. This shows that the proposed sparse PPC designs can render the NCS of Fig.~\ref{ppcfig1} practically stable. According to Fig~\ref{figppc2}, such stability holds regardless of the magnitude of $W_m$. Moreover, we can observe from Fig.~\ref{figppc2} that the states take larger values when $W_m$ is larger. Such performance degradation by increasing $W_m$ agrees with Remark~\ref{rem1n} and Remark~\ref{rem2}. Figure~\ref{figppc3} demonstrates the $\ell^0$ norm of the control packets $U(x(k))$. As curves in Fig.~\ref{figppc3} signify, indeed, the control packets generated based on both ${\ell}^2$-constrained ${\ell}^0$ sparse PPC and unconstrained ${\ell}^1$-${\ell}^2$ sparse PPC are sparse. Of course by manipulating $\nu$ in (\ref{eq3}) and $\Pi$ in (\ref{eq7}), one can regulate the sparsity of the control actions in unconstrained ${\ell}^1$-${\ell}^2$ sparse PPC and ${\ell}^2$-constrained ${\ell}^0$ sparse PPC, respectively. However, the trade-off existing between the system performance and sparsity should be taken into account. Note that $A$ in (\ref{eq46}) presents an unstable dynamics. Another interesting observation is that in the case of unconstrained ${\ell}^1$-${\ell}^2$ sparse PPC, the control input becomes less sparse as the disturbance amplitude grows larger and $Q$ and $\nu$ are kept fixed.

\begin{figure}
	\centering
	\includegraphics[width=\columnwidth,height=7cm]{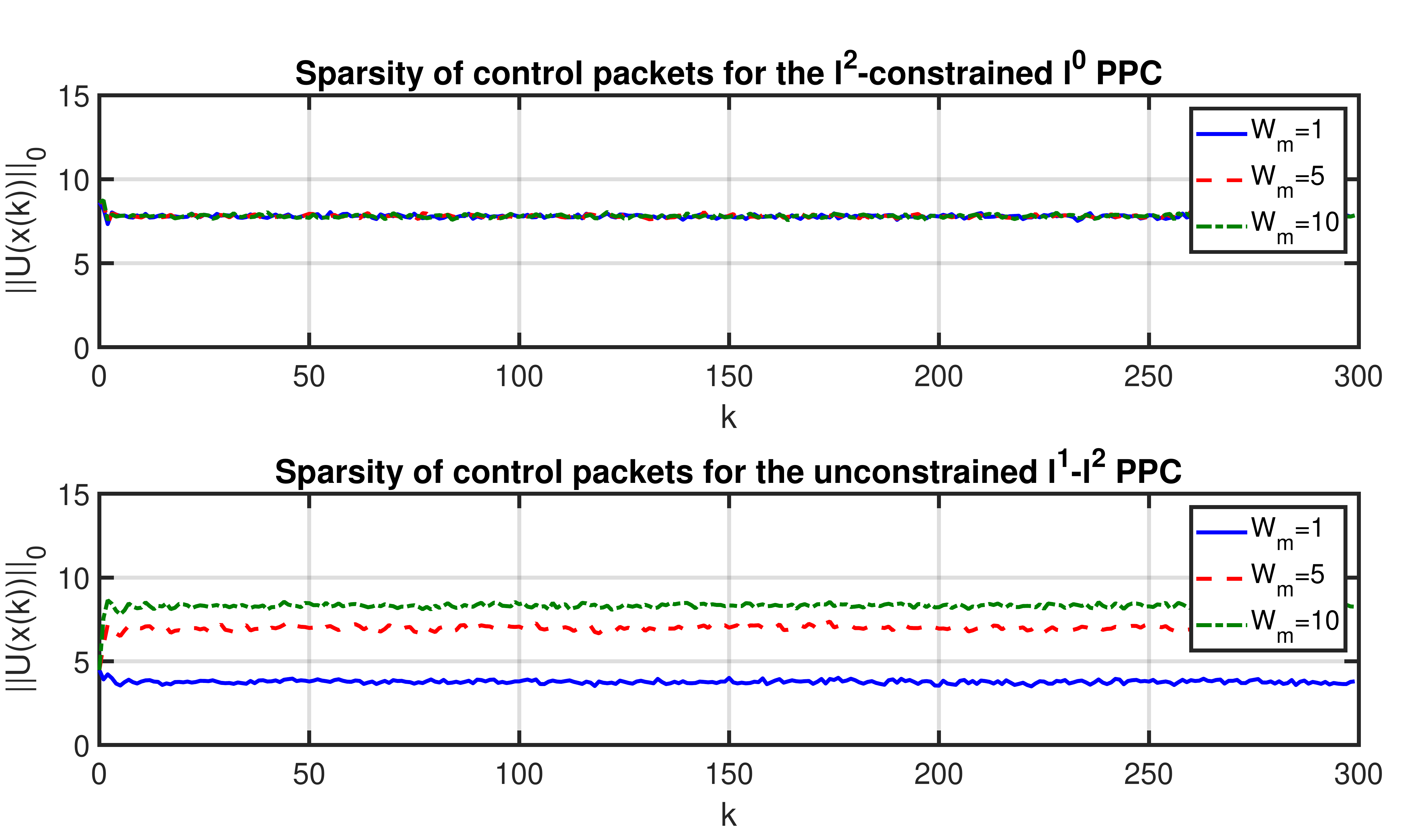}
	\caption{Average $\ell^0$ norm of the control packet ${U_h}(x(k))$ in the unconstrained ${\ell}^1$-${\ell}^2$ PPC (top) and ${\ell}^2$-constrained ${\ell}^0$ PPC (bottom)}
	\label{figppc3}
\end{figure}

%********************************************************************************************************************************************************
\section{conclusion}
This paper has investigated the stability of linear discrete-time plants with bounded disturbances controlled based on sparse PPC. The control occurs in a feedback loop where the controller and the actuator communicate via a digital channel with data packet dropouts. The sparse PPC strategies taken into account perform based upon unconstrained ${\ell}^1$-${\ell}^2$ and ${\ell}^2$-constrained ${\ell}^0$ sparsity promoting optimizations. We have established the conditions of practical stability for both ${\ell}^1$-${\ell}^2$ sparse PPC and ${\ell}^2$-constrained ${\ell}^0$ sparse PPC. The bounds we have derived on the plant state helps interpreting the effect of disturbance on system stability. Such bounds are increasing functions of the disturbance amplitude in both cases of ${\ell}^1$-${\ell}^2$ sparse PPC and ${\ell}^2$-constrained ${\ell}^0$ sparse PPC. Through simulation, we have illustrated that with the proposed sparse PPC design, practical stability can indeed be obtained by sparse control packets. Moreover, in each case of ${\ell}^1$-${\ell}^2$ sparse PPC and ${\ell}^2$-constrained ${\ell}^0$ sparse PPC, enlarging the amplitude of the disturbance signal leads to performance degradation without jeopardizing practical stability.   
%******************************************************************
%******************************************************************           
%********************************************************************************************************************************************************
\appendices
%********************************************************************************************************************************************************
%********************************************************************************************************************************************************
%********************************************************************************************************************************************************
%********************************************************************************************************************************************************
%********************************************************************************************************************************************************
%********************************************************************************************************************************************************
%*******************************************************************************************************************************************************
\bibliographystyle{IEEEtran}
\bibliography{refs}
\end{document}